\documentclass[3p,times,review]{elsarticle}
\newif\ifreview
\reviewfalse

\usepackage[utf8]{inputenc}
\usepackage[T1]{fontenc}

% {{{ packages

\usepackage{xparse}

% math
\usepackage{fixmath}
\usepackage{amsmath}
\usepackage{amssymb}
\usepackage{amsthm}

% pretty formatting
\usepackage{xcolor}
\usepackage{siunitx}
\usepackage{booktabs}
\usepackage{algorithm}

% better environments
\usepackage[shortlabels]{enumitem}
\usepackage{booktabs}
\usepackage{subcaption}
\usepackage{algorithmic}

\usepackage{hyperref}
\usepackage[nameinlink, noabbrev]{cleveref}

% graphics
\usepackage{tikz}
\usetikzlibrary{arrows.meta}
\usetikzlibrary{patterns}
\usetikzlibrary{shapes.arrows}

% }}}

% {{{ commands

% Add a serial/Oxford comma by default.

\crefrangeformat{theorem}{\textup{#3(#1)#4--#5(#2)#6}}

\NewDocumentCommand \newtheoremin { m m m } {
    \newtheorem{#1}{#2}
    \numberwithin{#1}{#3}
}

\newtheoremin{example}{Example}{section}
\newtheoremin{remark}{Remark}{section}
\newtheoremin{definition}{Definition}{section}
\newtheoremin{proposition}{Proposition}{section}
\newtheoremin{lemma}{Lemma}{section}
\newtheoremin{theorem}{Theorem}{section}

\crefname{algocf}{algorithm}{algorithms}
\Crefname{algocf}{Algorithm}{Algorithms}
\crefname{algocf}{algorithm}{algorithms}
\Crefname{algocf}{Algorithm}{Algorithms}

% \theoremstyle{plain}
% \newtheorem{theorem}{Theorem}[section]
% \newtheorem{lemma}[theorem]{Lemma}
% \theoremstyle{definition}
% \newtheorem{remark}[theorem]{Remark}

% {{{ math

\NewDocumentCommand \FrAdEx {} {\textrm{FrAdEx}}

\NewDocumentCommand \dx { O{x} } {\,\mathrm{d} #1}
\NewDocumentCommand \vect { m } {\mathbold{#1}}
\NewDocumentCommand \od { m m } {\dfrac{\mathrm{d} #1}{\mathrm{d} #2}}

\NewDocumentCommand \CaputoD { O{\alpha} m m } {{}^{PC} D^{#1}_{#3}[#2]}
\NewDocumentCommand \RiemannD { O{\alpha} m m } {{}^{PRL} D^{#1}_{#3}[#2]}

\NewDocumentCommand \varref { m } {\overline{#1}}
\NewDocumentCommand \varhat { m } {\hat{#1}}

% NOTE: use \avg*{expr} to avoid growing the surrounding brackets
\NewDocumentCommand \avg { sm } {
    \IfBooleanTF#1{\langle #2 \rangle}{\left\langle #2 \right\rangle}
}
\NewDocumentCommand \jump { sm } {
    \IfBooleanTF#1{\llbracket #2 \rrbracket}{\left\llbracket #2 \right\rrbracket}
}
\NewDocumentCommand \inp { sm } {
    \IfBooleanTF#1{\langle #2 \rangle}{\left\langle #2 \right\rangle}
}

\DeclareMathOperator*{\esssup}{ess\,sup}

% }}}

% }}}

% {{{ review

\ifreview
    \usepackage[mathlines]{lineno}

    % add line numbers

    \linenumbers

    \usepackage[textwidth=1.25\linewidth]{todonotes}
    \NewDocumentCommand \NoteGeneric { m m m } {\todo[
        size=\small,
        backgroundcolor=white,
        bordercolor=black,
        linecolor=#1,
        textcolor=#1]{@\textbf{#2}: #3}}

    \NewDocumentCommand \WrittenBy { m +m } {{\color{#1} #2}}
\else
    \NewDocumentCommand \NoteGeneric { m m m } {}
    \NewDocumentCommand \WrittenBy { m +m } {}
\fi

% }}}

\journal{Chaos, Solitons \& Fractals}
\bibliographystyle{elsarticle-num}

\begin{document}

\begin{frontmatter}

\title{
    Simulating neuronal dynamics in fractional adaptive exponential
    integrate-and-fire models}

% Authors: full names plus addresses.
\author[icamaddress]{Alexandru Fikl\corref{correspondingauthor}}
\cortext[correspondingauthor]{Corresponding Author}
\ead{alexandru.fikl@e-uvt.ro}

\author[uvtaddress]{Aman Jhinga}
\author[uvtaddress]{Eva Kaslik}
\author[skbaddress]{Argha Mondal}

\address[icamaddress]{
    Institute for Advanced Environmental Research,
    West University of Timișoara, 300086, Romania}
\address[uvtaddress]{
    Department of Mathematics and Computer Science,
    West University of Timișoara, Timișoara 300223, Romania}
\address[skbaddress]{
    Department of Mathematics,
    Sidho-Kanho-Birsha University, Purulia-723104, WB, India}

\begin{abstract}
We introduce an efficient discretization of a novel fractional-order adaptive
exponential (\FrAdEx{}) integrate-and-fire model, which is used to study the
fractional-order dynamics of neuronal activities. The discretization is based on
extension of L1-type methods that can accurately handle the exponential growth
and the spiking mechanism of the model. This new method is implicit
and uses adaptive time stepping to robustly handle the stiff system that arises
due to the exponential term. The implicit nonlinear system can be solved exactly,
without the need for iterative methods, making the scheme efficient while
maintaining accuracy. We present a complete error model for the numerical scheme
that can be extended to other integrate-and-fire models with minor changes. To
show the feasibility of our approach, the numerical method has been rigorously
validated and used to investigate several different spiking oscillations of the
model. We observed that the fractional-order model is capable of predicting
biophysical activities, which are interpreted through phase diagrams describing
the transition from one firing type to another. This simple model shows significant
promise, as it has sufficient expressive dynamics to reproduce several features
qualitatively from a biophysical dynamical perspective.
\end{abstract}

\begin{keyword}
fractional derivatives;
integrate-and-fire model;
L1 method;
adaptive time stepping.
\end{keyword}

% REQUIRED
% https://mathscinet.ams.org/mathscinet/msc/msc2020.html
% \subclass{
%     26A33  % Fractional derivatives and integrals
%     \and 37M05  % Simulation
%     \and 37N25  % Dynamical systems in biology
%     \and 45D05  % Volterra integral equations
%     \and 45G15  % Systems of nonlinear integral equations
%     \and 92C20.  % Neural biology
% }
\end{frontmatter}

\section{Introduction}
\label{sc:introduction}

The adaptive exponential (AdEx) integrate-and-fire model \cite{brette2005adaptive}
holds an important position in the field of computational neuroscience, serving
as a bridge between simplified neuron models and highly complex real-world neuronal
dynamics \cite{naud2008firing,touboul2008dynamics}. Originating
from classical integrate-and-fire models \cite{izhikevich2007dynamical,lapicque1907recherches},
the AdEx variant incorporates several mechanisms to better replicate the
spiking patterns of real biological neurons \cite{gorski2021conductance}. To
further enhance the fidelity of the AdEx models, authors have recently looked at
introducing non-standard operators with desirable properties, such as fractional
derivatives \cite{teka2014neuronal}.

Its precursor, the traditional leaky integrate-and-fire model \cite{izhikevich2007dynamical},
has been widely employed as a simple representation of spiking neurons, but has
showcased multiple limitations. To address these limitations, researchers
have proposed generalisations in three distinct directions. Firstly, the model
was extended to incorporate non-linear components, such as
quadratic \cite{izhikevich2003simple} or exponential terms \cite{gerstner2014neuronal}.
This modification allows replacing the strict voltage threshold with a more
realistic smooth spike initiation zone, which better resembles the behaviour of
real neurons. Secondly, an additional variable has been introduced into the model
\cite{lansky2008review}, which accounts for sub-threshold resonances or adaptation
mechanisms. Sub-threshold resonances capture how a neuron's response to input
varies with input frequency and adaptation mechanisms reflect changes in a neuron's
firing rate or behaviour in response to prolonged stimulation. A third direction
involves changing the stimulation paradigm from current injection to conductance
injection with the aim of bringing integrate-and-fire models closer to the
conditions experienced by cortical neurons in vivo
\cite{fourcaud2003spike,gerstner2014neuronal}. Conductance injection takes into
consideration the dynamic changes in synaptic conductance that occur in real
neuronal networks, enhancing the physiological relevance of the model. These
extensions have been incorporated into the AdEx model, making it more suitable
for simulating the behaviour of spiking neurons in diverse biological systems.

Detailed conductance-based models, such as Hodgkin-Huxley-type models, have
also been employed to investigate electrophysiological phenomena
\cite{izhikevich2004model,prinz2003alternative}. However, these models pose
challenges for direct mathematical analysis, numerical implementation, and
the replication of experimental spike patterns. Therefore, studies of integrate-and-fire
models, such as the AdEx model, have been performed and have shown promising
results compared to more complex neuron models
\cite{brette2005adaptive,dayan2005theoretical}. In particular, they demonstrate
their effectiveness in reproducing the behaviour of layer-V neocortical pyramidal
neurons under random stimuli \cite{clopath2007predicting}.

Very recently, an extension of the AdEx model has been proposed in \cite{souza2024adaptive},
which involves replacing the standard integer-order derivative with a so-called
(local) fractal derivative. The study suggests that the fractal derivative orders
significantly influence inter-spike intervals and mean firing frequencies,
potentially offering a more accurate representation of neuronal activity. We
investigate here a similar extension involving the fractional (Caputo) derivative,
which lends itself to a more rigorous mathematical analysis. The relevance of
time-fractional calculus has grown immensely in various interdisciplinary fields
over the past few decades due to its ability to incorporate memory effects and hereditary
characteristics inherent in many physical systems
\cite{podlubny1998fractional,samko1993fractional}. Fractional systems have
also been found to exhibit richer dynamics than some of their integer-order
counterparts\cite{Li2004, Yu2009}. As such, fractional calculus has been effectively
employed in modelling a wide spectrum of phenomena, from anomalous transport in
porous media \cite{Hilfer,Metzler} to complex behaviours observed in biological
systems \cite{ionescu2017role,Langlands2011,magin2010fractional}.

Incorporating fractional derivatives into integrate-and-fire neuronal models
introduces a new mathematical tool directed at capturing the nonlocal, history-dependent
behaviour of neuronal dynamics. Traditional models, based on integer-order
derivatives, fall short in modelling memory effects and continuous integration
of past neuronal states. This limitation is linked to the inability of
classical models to simulate long-term dependencies or anomalous propagation
behaviour \cite{Hilfer,Metzler} that are characteristics of certain neuronal systems.
To address these challenges, a fractional order leaky integrate-and-fire model
was introduced \cite{teka2014neuronal}, recognising that neuronal voltage trajectories
often exhibit dynamics across multiple time scales
\cite{gilboa2005history,weinberg2017history}. These dynamics, indicative of
complex intracellular interactions, often follow power-law behaviours that may be
efficiently modelled by fractional differential equations \cite{Deng2022}.

In this paper, we describe a fractional-order adaptive exponential (\FrAdEx{})
model, designed to investigate the fractional-order dynamics
that govern neuronal activity. As shown, the \FrAdEx{} model \eqref{eq:model:fde}
with the reset condition \eqref{eq:model:reset} is an impulsive system of
(Caputo-type) fractional differential equations with state-dependent impulses.
Impulsive differential equations offer a valuable mathematical framework for
investigating evolutionary processes characterised by abrupt and sudden state
changes. They can be categorised into two main classes based on
their impulsive events: fixed-time impulsive systems and state-dependent impulsive
systems. Although extensive research efforts have been dedicated to the study of
fixed-time impulsive systems \cite{wang2016survey}, it is crucial to acknowledge
that in the real world many systems do not experience impulses at predetermined
fixed intervals,  but rather in response to specific states
\cite{Lakshmikantham1994,yang2018state}. To the best of our knowledge, there
are currently no published results in the literature regarding the theoretical and numerical analysis of fractional-order
impulsive systems with state-dependent impulses.

Consequently, the objective of this paper is to propose and investigate numerical
methods tailored for the \FrAdEx{} model. Specifically, we describe an implicit
adaptive L1-type scheme engineered to address the three main challenges in
simulating the \FrAdEx{} model. Firstly, the (Caputo) fractional derivative is
defined as an integral operator with a weakly singular kernel which we discretise
using the standard linear interpolation of the L1 method \cite{li2015numerical}.
This allows a straightforward handling of the singularity and results in an implicit
numerical scheme. An implicit method is desired for the \FrAdEx{} model due to the
exponential growth in the spike initiation zone, which results in a generally stiff
system. Secondly, using an implicit method is not sufficient to accurately represent
the exponential growth, so the L1 method is augmented with an adaptive time-stepping
scheme based on \cite{Jannelli2020}. The non-uniform time step allows a fine
control over the error in the exponential region and results in accurate
estimates of the spike times. The third issue is common to state-dependent
impulsive systems and requires an accurate representation of the reset condition
as the neuron produces spikes. Discretizations of integer-order integrate-and-fire
models have been investigated in \cite{Shelley2001}, where a high-order
approximation of the spike times and reset is provided. However, such extensions
are not possible for the \FrAdEx{} model due to the memory properties of the
fractional derivative. Therefore, we present an estimate of the spike times
based on the Lambert W function that does not require iteration. High-order methods
for fractional differential equations have also been developed
\cite{li2015numerical,Li2017,Yang2023}, but much less attention has been shown
to the fully non-uniform case required here. As such, significant effort will be
required to extend the current discretization to higher-order. The first-order
method presented here features enhanced stability properties and a complete error model
that can be extended to other integrate-and-fire-type models with minor
modifications.

The structure of the paper is as follows. In \Cref{sc:model}, we will introduce
the \FrAdEx{} model and describe its parameters and general construction. In
\Cref{sc:fractional}, we will briefly define the necessary fractional calculus
concepts and notation to interpret the \FrAdEx{} model, which is used
in \Cref{sc:dim} to present a nondimensional version of the model that is used
in simulations. Then, \Cref{sc:methods} describes the
L1-type scheme used to discretise the model and the specific handling of the
state-dependent impulses. The stability and convergence of the method is analysed
in \Cref{sc:error}. We then verify and validate the method on multiple
integrate-and-fire models and show its performance properties in \Cref{sc:results}.
In \Cref{sc:behaviour}, we show that the newly introduced \FrAdEx{} model can
numerically reproduce known phenomenological responses of neurons for several
parameter ranges. Finally, we discuss the conclusions in \Cref{sc:conclusions}.

\section{Model}
\label{sc:model}

The fractional-order model we consider here is a straightforward extension of
the standard integer-order \textrm{AdEx} model from \cite{naud2008firing}. This
extension is common in capacitance-based models and relies on the generalisation of
Curie's empirical law described in \cite{Westerlund1994capacitor}. It is given by
\begin{equation} \label{eq:model:fde}
\begin{aligned}
C \od{^{\alpha_1} V}{t^{\alpha_1}} & =
    I
    - g_L (V - E_L)
    + g_L \Delta_T \exp \left(\frac{V - V_T}{\Delta_T}\right)
    - w, \\
\tau_w \od{^{\alpha_2} w}{t^{\alpha_2}} & =
    a (V - E_L) - w,
\end{aligned}
\end{equation}
under suitable initial conditions with fractional orders $0 < \alpha_i < 1$,
not necessarily equal (for details, see \Cref{sc:fractional}). The \FrAdEx{} model
describes the evolution of the membrane voltage potential $V(t)$, driven by an
external current $I(t)$, and an adaptation variable $w(t)$. The model is constructed
such that when the voltage potential increases beyond the parameter value $V_T$,
the exponential term in the first equation activates a positive feedback for
spike generation. In integrate-and-fire models, this exponential increase in the
potential is truncated at a threshold value $V_{\text{peak}}$, which is used to
denote the generation of an action potential. At this point, the voltage is
reset to a fixed value $V_r$ and the adaptation variable $w$ is updated by a
fixed offset $b$. In physical systems, the neuron exhibits a spike if $V$ grows
rapidly. However, in the \FrAdEx{} model, a spike is generated only when it reaches
the threshold value $V_{\text{peak}}$. This may shift spike times by a very
small time duration (milliseconds) compared to
experiments \cite{clopath2007predicting,naud2008firing}. The model \eqref{eq:model:fde}
uses the following reset condition to control the upswing and downswing of the
membrane potential
\begin{equation} \label{eq:model:reset}
\text{if $V > V_{\text{peak}}$ then }
\begin{cases}
V \gets V_r, \\
w \gets w + b.
\end{cases}
\end{equation}

The parameters describing the model are divided into so-called \emph{scaling}
parameters and \emph{bifurcation} parameters. The scaling parameters are responsible
for scaling the time axis and scaling (or offsetting) the system variables
\cite{naud2008firing}. The five scaling parameters are the total capacitance
$C$, the total leak conductance $g_L$, the effective resting potential $E_L$,
the threshold slope factor $\Delta_T$, and the effective threshold potential $V_T$.
The bifurcation parameters are directly proportional to the time constant $\tau_w$,
the conductance $a$, the spike triggered adaptation $b$ and the reset potential $V_r$.
These parameters are responsible for changes in the qualitative neuronal activities
(such as multiple firings). For example, the parameter $a$ controls the sensitivity
of the adaptation current to the membrane voltage.

\section{Fractional Calculus Background}
\label{sc:fractional}

We give here a short introduction to the notation used throughout the
remaining sections and motivate the choice of fractional derivative operators.
For simplicity, we denote the fractional model \eqref{eq:model:fde} of order
$\vect{\alpha} \in (0, 1)^2$ in vector form as
\[
\od{^{\vect{\alpha}} \vect{y}}{t^{\vect{\alpha}}} = \vect{f}(t, \vect{y}),
\]
where $\vect{y} \triangleq (V, w)$ and $\vect{f}: \mathbb{R}_+ \times \mathbb{R}^2
\to \mathbb{R}^2$ is the right-hand side. We also denote by $\tau_m$, for
$m \in \{0, \dots, M\}$, the $M + 1$ spike times that occur in the system, with
the convention that $\tau_0 = 0$ and $\tau_N = T$ denote the domain limits. The
interior spikes $\tau_m(\vect{y})$, for $m \in \{1, \dots, M - 1\}$,
are state-dependent and are not known a priori. We also use the notation
$\mathrm{AC}([0, T]; \mathbb{R}^n)$ to denote the space of absolutely continuous
functions and $\mathrm{PAC}([0, T]; \mathbb{R}^n)$ to denote the space
of piecewise absolutely continuous functions on $[0, T]$ with values in
$\mathbb{R}^n$. The standard notation $\mathrm{PAC}^k([0, T], \mathbb{R}^n)$
is used to denote functions that are piecewise differentiable $k$ times
and each derivative is absolutely continuous. For impulsive systems,
the solutions are found in the $\mathrm{PAC}^k([0, T]; \mathbb{R}^n)$ space due
to the reset condition. To handle the vector system, we make use of standard
multi-index notation, where operations are applied component-wise, unless
indicated.

The choice of fractional derivative in \eqref{eq:model:fde} is also a choice that
must be made in the modelling process. In this work, we focus on the
Caputo derivative, which is classically defined for $\vect{y} \in
\mathrm{AC}([0, T]; \mathbb{R}^2)$
by \cite{kilbas2006theory}
\[
{}^C D^{\vect{\alpha}}_{0^+}[\vect{y}](t) \triangleq
    \frac{1}{\Gamma(1 - \vect{\alpha})} \int_0^t
    \frac{\vect{y}'(s)}{(t - s)^{\vect{\alpha}}} \dx[s],
\]
for $t > 0$. The Caputo derivative is often used in physical systems because it
has several attractive qualities, such as local initial conditions and the
fact that the derivative of a constant vanishes \cite{kilbas2006theory}.
This definition can be extended to the case of $\vect{y} \in \mathrm{PAC}([0, T];
\mathbb{R}^2)$, as shown in \cite{Feckan2012}. We have that
\begin{equation} \label{eq:fractional:caputo}
\CaputoD[\vect{\alpha}]{\vect{y}}{0^+}(t) \triangleq
\frac{1}{\Gamma(1 - \vect{\alpha})} \left[
    \sum_{j = 0}^{m - 1}
    \int_{\tau_j}^{\tau_{j + 1}} \frac{\vect{y}'(s)}{(t - s)^{\vect{\alpha}}} \dx[s]
    + \int_{\tau_m}^t \frac{\vect{y}'(s)}{(t - s)^{\vect{\alpha}}} \dx[s]
    \right],
\end{equation}
for $t \in (\tau_m, \tau_{m + 1}]$. According to \cite[Lemma 2.7]{Feckan2012},
we can write the impulsive system in Volterra integral equation form as
\begin{equation} \label{eq:fractional:caputo_volterra}
\vect{y}(t) = \vect{y}_0 + \sum_{j = 1}^m [\vect{y}(\tau_j^+) - \vect{y}(\tau_j^-)] +
    \frac{1}{\Gamma(\vect{\alpha})}
    \int_0^t \frac{\vect{f}(s, \vect{y}(s))}{(t - s)^{1 - \vect{\alpha}}} \dx[s],
\end{equation}
where the one-sided limits are defined as
\[
y(\tau_m^\pm) \triangleq \lim_{\epsilon \to 0^+} y(\tau_m \pm \epsilon).
\]

In practice, both formulations can be used in the construction of a
numerical method. However, for the \FrAdEx{} model, we prefer discretising the
Caputo derivative directly, as it avoids the summation over the exponential
right-hand side and gives rise to implicit methods, which allow for larger
time steps.

\begin{remark}
For the \FrAdEx{} model \eqref{eq:model:reset}, we have a constant jump at each
spike time, i.e., $\vect{y}(\tau_m^+) - \vect{y}(\tau_m^-) = \Delta \vect{y}
\triangleq (V_r - V_{\text{peak}}, b)$. Therefore, we can write
\[
\vect{y}(t) = \vect{y}_0 + m \Delta \vect{y} +
    \frac{1}{\Gamma(\vect{\alpha})}
    \int_0^t \frac{\vect{f}(s, \vect{y}(s))}{(t - s)^{1 - \vect{\alpha}}} \dx[s].
\]
\end{remark}

\begin{remark}
The Riemann--Liouville derivative can also be used in the definition
of \eqref{eq:model:fde}. In the piecewise-continuous case, we have
that \cite[Definition 2.4]{Liu2019}
\[
\RiemannD[\vect{\alpha}]{\vect{y}}{0+}(t) =
\CaputoD[\vect{\alpha}]{\vect{y}}{0+}(t)
+ \frac{1}{\Gamma(1 - \vect{\alpha})} \frac{\vect{y}(\tau_0^+)}{t^{\vect{\alpha}}}
+ \frac{1}{\Gamma(1 - \vect{\alpha})} \sum_{j = 1}^m
    \frac{\vect{y}(\tau_j^+) - \vect{y}(\tau_j^-)}{(t - \tau_j)^{\vect{\alpha}}},
\]
for $t \in (\tau_m, \tau_{m + 1}]$. However, while the Riemann--Liouville derivative
requires less regularity in the solutions, it also requires nonlocal initial
conditions \cite{kilbas2006theory}.
\end{remark}

\section{Non-Dimensional Model}
\label{sc:dim}

The model \eqref{eq:model:fde} can be non-dimensionalized to reduce the parameter
space to only $4$ parameters in the equations and the $3$ parameters in the
reset condition. We provide here a non-dimensionalization that we use going forward
to simplify the description and analysis of the numerical methods. We follow the
suggestion from \cite{naud2008firing} and define the non-dimensional variables
\[
\varref{t} \triangleq \sqrt[\alpha_1]{\frac{g_L}{C}} t, \quad
\varref{V} \triangleq \frac{V - V_T}{\Delta_T}
\quad \text{and} \quad
\varref{w} \triangleq \frac{w}{\Delta_T g_L}.
\]

This choice of non-dimensionalization is specific to the fractional case, where
the Caputo derivative itself has units of $\mathrm{s}^{-\alpha}$. Replacing these
relations into our original model \eqref{eq:model:fde}, we obtain
\begin{equation} \label{eq:dim:model}
\left\{
\begin{aligned}
\od{^{\alpha_1} \varref{V}}{\varref{t}^{\alpha_1}} & =
\varref{I} - (\varref{V} - \varref{E}_L) + \exp(\varref{V}) - \varref{w}, \\
\varref{\tau}_w \od{^{\alpha_2} \varref{w}}{\varref{t}^{\alpha_2}} & =
    \varref{a} (\varref{V} - \varref{E}_L) - \varref{w},
\end{aligned}
\right.
\end{equation}
where
\[
\varref{I} \triangleq \frac{I}{\Delta_T g_L},
\quad
\varref{E}_L \triangleq \frac{E_L - V_T}{\Delta_T},
\quad
\varref{\tau}_w \triangleq \left(\frac{g_L}{C}\right)^{\frac{\alpha_2}{\alpha_1}} \tau_w
\quad \text{and} \quad
\varref{a} \triangleq \frac{a}{g_L}.
\]

We also obtain the non-dimensional reset condition
\begin{equation} \label{eq:dim:reset}
\varref{V} > \varref{V}_{\text{peak}}
\quad \text{then} \quad
\begin{cases}
\varref{V} \gets \varref{V}_r, \\
\varref{w} \gets \varref{w} + \varref{b},
\end{cases}
\end{equation}
where
\[
\varref{V}_{\text{peak}} \triangleq \frac{V_{\text{peak}} - V_T}{\Delta_T}, \quad
\varref{V}_r \triangleq \frac{V_r - V_T}{\Delta_T},
\quad \text{and} \quad
\varref{b} \triangleq \frac{b}{\Delta_T g_L}.
\]

\begin{remark}
Equation \eqref{eq:dim:model} and the reset condition \eqref{eq:dim:reset}
will be used from this point onwards. To simplify the notation, we will drop the
overline notation (e.g. $\varref{V}$) in the following with the understanding
that the variables are all non-dimensional.
\end{remark}

\section{Numerical Methods}
\label{sc:methods}

For the discretization of the \FrAdEx{} model \eqref{eq:dim:model}, we use an L1-type
method on a non-uniform grid, to accurately account for the exponential growth
of the solutions and the discontinuous spiking. For a complete description
and analysis of the standard L1 method see classic monographs such
as \cite{li2015numerical}. Following the same ideas, we use a linear approximation
of the function in each interval, i.e. for $s \in [t_n, t_{n + 1}]$ we have that
\[
\vect{y}(s) \approx
    \frac{t_{n + 1} - s}{t_{n + 1} - t_n} \vect{y}_n^+ +
    \frac{s - t_n}{t_{n + 1} - t_n} \vect{y}_{n + 1}^-
\implies
\vect{y}'(s) \approx \frac{\vect{y}^-_{n + 1} - \vect{y}^+_n}{t_{n + 1} - t_n},
\]

\begin{figure}[ht!]
\centering
\begin{tikzpicture}[scale=1.1]
\draw[->] (0, 0) -- (7, 0) node [below] {$t$};

\draw[thick, fill]
    (1, 0) node [below] {$t_{n - 1}$} --
    (1, 2) circle [radius=0.05] node [above] {$\vect{y}^\pm_{n - 1}$};
\draw[thick, fill]
    (3, 0) node [below] {$t_n$} --
    (3, 3) circle [radius=0.05] node [above] {$\vect{y}^\pm_n$};
\draw[thick, fill]
    (6, 0) node [below] {$t_{n + 2}$} --
    (6, 2) circle [radius=0.05] node [right] {$\vect{y}^\pm_{n + 2}$};
\draw[dotted] (5, 0) node [below] {$\varhat{t}_{n + 1}$} -- (5, 5) -- (4, 4);
\draw (5, 5) circle [radius=0.05] node [right] {$\varhat{\vect{y}}^-_{n + 1}$};
\draw[dashed] (1, 2) -- (3, 3) -- (4, 4);
\draw[dashed, fill] (4, 1.5) node [below right]
    {$\vect{y}^+_{n + 1}$} circle [radius=0.05] -- (6, 2);

\draw[thick, fill]
    (4, 0) node [below] {$t_{n + 1}$} --
    (4, 4) circle [radius=0.05] node [above] {$\vect{y}^-_{n + 1}$};
\end{tikzpicture}
\caption{Reconstruction of the solution at a spike. The initial guess
    $\varhat{\vect{y}}^-_{n + 1}$ is pulled back to the approximated time
    $t_{n + 1}$, where the discontinuous spike occurs.}
\label{fig:methods:spikes}
\end{figure}

\noindent where $\vect{y}_n^\pm$ denote the the right and left-sided limits, respectively,
at $t_n$. The two values are different when a numerical spike has been inserted
at time $t_n$ (see \Cref{fig:methods:spikes}). This results in the following discrete
form of the system \eqref{eq:dim:model}
\begin{equation} \label{eq:methods:l1_naive}
\sum_{k = 0}^n
    \vect{d}_{n + 1, k} \odot \frac{\vect{y}^-_{k + 1} - \vect{y}^+_k}{\Delta t_k} =
    \vect{f}(t_{n + 1}, \vect{y}_{n + 1}^-),
\end{equation}
where $\odot$ denotes the component-wise Hadamard product. The weights are given by
\begin{equation} \label{eq:methods:l1_weights}
(d_{n + 1, k})_i \triangleq
    \frac{(t_{n + 1} - t_k)^{1 - {\alpha}_i} - (t_{n + 1} - t_{k + 1})^{1 - {\alpha_i}}}
        {\Gamma(2 - \alpha_i)}.
\end{equation}

By construction, the method is implicit and the solution $\vect{y}^-_{n + 1}$
must be obtained by finding the root of the nonlinear equation
\[
\vect{y}^-_{n + 1}
- \vect{h}_{n + 1} \odot \vect{f}(t_{n + 1}, \vect{y}_{n + 1})
=
    \vect{y}_n^+
    - \vect{h}_{n + 1} \odot \sum_{k = 0}^{n - 1}
        \vect{d}_{n + 1, k} \odot \frac{\vect{y}^-_{k + 1} - \vect{y}_k^+}{\Delta t_k},
\]
where $(h_{n + 1})_i = \Delta t_n / (d_{n + 1, n})_i =
\Gamma(2 - \alpha_i) \Delta t_n^{\alpha_i}$.

However, as is, the discretization does not yet take into account the reset
condition \eqref{eq:dim:reset}. In practical terms, in \eqref{eq:methods:l1_naive}
we can encounter a reset in the current time interval $[t_n, t_{n + 1}]$. To
maintain the requirement that the numerical spike times occur at interval boundaries, we
recompute $t_{n + 1}$ to ensure that the solution remains piecewise
continuous, as shown in \Cref{fig:methods:spikes}. Therefore, both $y_{n + 1}^-$ and
$t_{n + 1}$ are provisional in this equation and may be modified by the end of
the time step. To emphasise this fact, we rewrite the equation as
\begin{equation} \label{eq:methods:l1}
\varhat{\vect{y}}^-_{n + 1} -
\varhat{\vect{h}}_{n + 1} \odot \vect{f}(\varhat{t}_{n + 1}, \varhat{\vect{y}}^-_{n + 1})
= \vect{y}^+_n - \varhat{\vect{h}}_{n + 1} \odot \sum_{k = 0}^{n - 1}
        \hat{\vect{d}}_{n + 1, k}
        \odot \frac{\vect{y}^-_{k + 1} - \vect{y}_k^+}{\Delta t_k}
= \hat{\vect{r}}_{n + 1},
\end{equation}
where the hat terms denote a dependence on the yet-to-be-determined
$\varhat{t}_{n + 1}$. The complete pseudocode of the
algorithm that evolves the solution to the next time step is given
in \Cref{alg:methods:evolution}.

\begin{algorithm}[H]
\caption{L1 method for \FrAdEx{} integrate-and-fire models.}
\label{alg:methods:evolution}
\begin{algorithmic}
\REQUIRE{Derivative order $\vect{\alpha}$ and time span $[0, T]$.}
\REQUIRE{I.C. $\vect{y}_0$, parameters $(I, E_L, \tau_w, a)$
    from \eqref{eq:dim:model}, and $(V_{\text{peak}}, V_r, b)$
    from \eqref{eq:dim:reset}.}
\REQUIRE{Initial step $\Delta t_0$ and adaptive algorithm parameters
    from \Cref{ssc:methods:adaptive}.}

\WHILE{$t_n \le T$}
    \STATE \begin{enumerate}
        \item[1.] Solve \eqref{eq:methods:l1} using methods
        from \Cref{ssc:methods:lambertw}.

        \item[2.] If $\varhat{V}^-_{n + 1}$ is real, continue to step 4 with
        $t_{n + 1} \equiv \varhat{t}_{n + 1}$ and
        \[
        (\vect{y}^-_{n + 1}, \vect{y}^+_{n + 1}) =
        (\varhat{\vect{y}}^-_{n + 1}, \varhat{\vect{y}}^-_{n + 1}).
        \]
        \item[3.1] If $\varhat{V}^-_{n + 1}$ is complex, compute a step
        according to \Cref{ssc:methods:spikes} and set
        \[
        t_{n + 1} = t_n + \Delta t_{\text{Lambert}}.
        \]
        \item[3.2] If $t_{n + 1} - t_n \le \Delta t_{\text{min}}$, assume a spike occurred
        and set
        \[
        \begin{aligned}
        (V^-_{n + 1}, V^+_{n + 1}) & = (V_{\text{peak}}, V_r), \\
        (w^-_{n + 1}, w^+_{n + 1}) & = (c_0 V_{\text{peak}} + c_1, c_0 V_{\text{peak}} + c_1 + b),
        \end{aligned}
        \]
        where $c_0$ and $c_1$ are defined in \eqref{eq:methods:lambert}. Continue
        to step 5 (step accepted).

        \item[4.] $h \gets$
        \texttt{DetermineTimestep}{$(t_0, \vect{y}^\pm_0), \dots, (t_{n + 1}, \vect{y}^\pm_{n + 1})$}
        using \Cref{ssc:methods:adaptive}.
        \item[5.] If the step is accepted, continue to the the next step
        with $\Delta t_{n + 1} = h$; \\
        otherwise retry the step with $\Delta t_n = h$ \emph{without} updating
        the solution.
    \end{enumerate}
\ENDWHILE
\end{algorithmic}
\end{algorithm}

In \Cref{ssc:methods:lambertw}, we describe a solution method for \eqref{eq:methods:l1}
that does not require an iterative solver, but instead relies on computing the
special Lambert W function. The Lambert W function can also be used to bound the
allowable time step $\Delta t_{n + 1}$, which is described in \Cref{ssc:methods:spikes}.
Finally, an adaptive time-stepping algorithm is described in \Cref{ssc:methods:adaptive}.

\subsection{Lambert W Solution}
\label{ssc:methods:lambertw}

The evolution equation \eqref{eq:methods:l1} is implicit and must be
solved for the value of $\vect{y}^-_{n + 1}$ (and $\vect{y}^+_{n + 1}$ according
to \Cref{alg:methods:evolution}). While this can be done by iterative Newton--Raphson-type
methods, it is not necessarily for this class of adaptive exponential
integrate-and-fire models, as an analytical solution can be obtained. The equation
can be written component-wise, while expanding the right-hand side
from \eqref{eq:dim:model},
\[
\begin{aligned}
\varhat{V}^- & = \varhat{h}_V \left(
    I
    - (\varhat{V}^- - E_L)
    + \exp(\varhat{V}^-)
    - \varhat{w}^-
\right) + \varhat{r}_V, \\
\varhat{w}^- & = \frac{\varhat{h}_w}{\tau_w} \left(
    a (\varhat{V}^- - E_L) - \varhat{w}^-
\right) + \varhat{r}_w,
\end{aligned}
\]
where we have dropped the $n + 1$ index to simplify the notation. We can see
here that the second equation can be solved for $\varhat{w}^-$ for a known
$\varhat{V}^-$. We have that
\[
\varhat{w}^-
= \frac{a \varhat{h}_w}{\varhat{h}_w + \tau_w} \varhat{V}^-
+ \frac{\tau_w \varhat{r}_w - a \varhat{h}_w E_L}{\varhat{h}_w + \tau_w}
\triangleq c_0 \varhat{V}^- + c_1,
\]
which we can then replace into the $\varhat{V}^-$ equation to obtain
\[
\hat{V}^- + c_2 = c_3 \exp (\hat{V}^-),
\]
where
\[
c_2 = \frac{\varhat{h}_V (I + E_L - c_1) + \varhat{r}_V}
            {1 + \varhat{h}_V (1 + c_0)}
\qquad \text{and} \qquad
c_3 = \frac{\varhat{h}_V}{1 + \varhat{h}_V (1 + c_0)}.
\]

This equation has a known analytical solution in terms of the Lambert W function.
Therefore, the final solution is written as
\begin{equation} \label{eq:methods:lambert}
\begin{cases}
\varhat{V}^- = -c_2 - W[-c_3 \exp(-c_2)], \\
\varhat{w}^- = c_0 \varhat{V}^- + c_1.
\end{cases}
\end{equation}

\subsection{Time Step Limit}
\label{ssc:methods:spikes}

The Lambert W-based solution \eqref{eq:methods:lambert} also provides an accurate
method to approximate the maximum allowable time step when approaching
a spike. By definition, we know that the Lambert W function gives a complex-valued
solution to $y = x e^x$. If we require the solution to be real, then
\[
x =
\begin{cases}
W_0(y), & \quad ~ \,0 \le y, \\
W_0(y), W_{-1}(y), & \quad -\frac{1}{e} \le y < 0,
\end{cases}
\]
where $W_k(y)$ denotes the $k$-th branch of the Lambert W function. We can see
that on $-1/e \le y < 0$ the solution is multivalued, with the property that
\[
W_0(y) \ge W_{-1}(y),
\qquad \forall y \in \left(-\frac{1}{e}, 0\right),
\]
so the $k = 0$ branch is always larger on the given interval. When considering
the two branches in \eqref{eq:methods:lambert}, we always choose the $k = 0$ branch,
as it results in a smaller change in $\hat{V}^-$ and is consistent in the
limit of $\Delta t_n \to 0$.

Therefore, ensuring that the argument of the Lambert W function is at most equal
to the lower bound $-1/e$ will result in an equivalent bound on the allowable
time step $\Delta t_n \le \Delta t_{\text{Lambert}}$. Following \eqref{eq:methods:lambert},
we require that
\[
-\frac{1}{e} \le -c_3 \exp\left(-c_2\right) < 0.
\]

\begin{remark}
We have that
\[
c_3 = \frac{\varhat{h}_V}{1 + \varhat{h}_V (1 + c_0)} > 0
\Longleftrightarrow
a \ge -\frac{\varhat{h}_w + \tau_w}{\varhat{h}_w} \frac{1 + \varhat{h}_V}{\varhat{h}_V},
\]
where $\varhat{h}_V, \varhat{h}_w, \tau_w > 0$. In the limit of small $\Delta t_n$,
the inequality is satisfied for all parameters $(a, \tau_w)$ of physical
significance (see \cite[Table 1]{naud2008firing}), so we assume $c_3 > 0$.
\end{remark}

Assuming that $c_3 > 0$, the condition can be simplified to
\begin{equation} \label{eq:methods:spike_time_limits}
0 < c_3 \exp(-c_2 + 1) \le 1,
\end{equation}
where the left inequality is satisfied automatically and the right inequality
can be used to estimate a maximum allowable time step. Then, to obtain this time
step we must find the root of the following nonlinear equation
\[
c_3(\Delta t^*) \exp(-c_2(\Delta t^*) + 1) = 1,
\]
for $\Delta t^* \in [0, \hat{t}_{n + 1} - t_n]$. However, this
can be very inefficient due to the fact that the time step $\Delta t^*$ appears
in the memory terms $\varhat{\vect{r}}_{n + 1}$ \eqref{eq:methods:l1}, which
would need to be recomputed at each iteration of the optimisation problem. In
practice, we consider that the memory terms do not depend explicitly on the time
step. This gives an upper bound on the time step that is not optimal in the
sense of \eqref{eq:methods:spike_time_limits}, but gives sufficiently good results.
Then, the Lambert W-based maximum time step estimate is set to
\[
    \Delta t_{\text{Lambert}} = \Delta t^*.
\]

\begin{remark}
The optimisation problem required to find $\Delta t^*$ can be solved efficiently.
The argument of the Lambert W function is well behaved as a function of
$\Delta t^*$ and we know that the solution is tightly bracketed in
$[0, \hat{t}_{n + 1} - t_n]$.
\end{remark}

\begin{remark}
If the assumption that the memory terms do not depend on the time step is seen
as prohibitive, the algorithm can always fall back to solely making use of the
adaptive algorithm described in \Cref{ssc:methods:adaptive}. This may result in
a poorer approximation of the spike times, but the method itself maintains the
error estimates provided in \Cref{sc:error}.
\end{remark}

\subsection{Adaptive Time Step}
\label{ssc:methods:adaptive}

Due to the exponential growth near the generation of an action potential, as
seen in the results from \cite{naud2008firing}, an accurate solution
to \eqref{eq:dim:model} requires an adaptive time-stepping method. Adaptive
methods are well developed for the integer order case \cite{Hairer2010},
where the truncation error can be used to give accurate approximations of the
required time steps. However, equivalent methods have not yet been developed
for fractional-order evolution equations. A notable exception is the recent work
from \cite{Jannelli2020}, which introduces a simple error indicator with good
properties for fractional equations. Here, we give a short description of the
adaptive algorithm.

The error estimator derived in \cite{Jannelli2020} is given by
\[
\chi_{n + 1} \triangleq
    \Gamma(1 + \alpha) \frac{t_{n + 1} - t_n}{t^\alpha_{n + 1} - t_n^\alpha}
    \frac{\|\vect{y}_{n + 1} - \vect{y}_n\|}{\|\vect{y}_n\|},
\]
for a chosen norm $\|\cdot\|$. The error $\chi_{n + 1}$ is expected to be
maintained between $\chi_{\text{min}} < \chi_{n + 1} < \chi_{\text{max}}$. As
written, this error estimator has a few downsides: it has units of $s^{1 - \alpha}$
and does not apply directly to the case of different orders $\alpha$ for each
component. Therefore, we propose a simple extension in the form of
\begin{equation} \label{eq:methods:estimator}
\hat{\chi}_{n + 1} \triangleq \|\Gamma(1 + \vect{\alpha})\|
    \frac{(t_{n + 1} - t_n)^\alpha}{t_{n + 1}^\alpha - t_n^\alpha}
    \frac{\|\vect{y}_{n + 1} - \vect{y}_n\|}{\|\vect{y}_n\|}
\implies
\chi_{n + 1} \triangleq \frac{\hat{\chi}_{n + 1} - \chi_{\text{min}}}
                             {\chi_{\text{max}} - \chi_{\text{min}}}.
\end{equation}

The new $\chi_{n + 1}$ estimator has normalised values in $[0, 1]$, based on the
reference values $\chi_{\text{min}}$ and $\chi_{\text{max}}$, and is non-dimensional.
Then, the adaptive algorithm follows the steps from \cite{Jannelli2020}:
\begin{enumerate}
    \item Compute the estimator $\chi_{n + 1}$ from \eqref{eq:methods:estimator}.
    \item If $0 < \chi_{n + 1} < 1$, set $\Delta t_{n + 1} = \theta \Delta t_n$,
    for a safety factor $\theta \in (0, 1]$ close to $1$.
    \item If $\chi_{n + 1} < 0$, the step is accepted and can be increased
    by setting $\Delta t_{n + 1} = \rho \Delta t_n$, for $\rho > 1$.
    \item If $\chi_{n + 1} > 1$, the step is rejected and must be decreased
    by setting $\Delta t_{n + 1} = \sigma \Delta t_n$, for $\sigma \in (0, 1]$,
    subject to the requirement in step 5.
    \item If $\Delta t_{n + 1} < \Delta t_{\text{min}}$, the step is accepted
    and we set $\Delta t_{n + 1} = \Delta t_{\text{min}}$.
\end{enumerate}

As such, the algorithm has a total of $6$ parameters that must be chosen based
on the problem at hand. The $\chi_{\text{min}}$ and $\chi_{\text{max}}$ bounds
are the most important, as they determine the thresholds where the time step can
be increased or decreased. The analysis from \cite{Jannelli2020} does not provide
a robust way to choose these parameters and we do not explore this improvement.

\section{Stability and Convergence}
\label{sc:error}

In this section, we present an error analysis of the piecewise L1 method
described in \Cref{sc:methods}, which can be applied to the \FrAdEx{} model and other
similar impulsive models. For simplicity, we only consider the scalar case,
as the extension to the vector case is straightforward. The scalar equation is
\begin{equation} \label{eq:error:model}
\od{^\alpha y}{t^\alpha} = f(t, y)
\quad \text{if} \quad y > y_{\text{peak}} \quad \text{then} \quad y \gets y_r,
\end{equation}
where the right-hand side is assumed to be nonlinear and $0 < \alpha < 1$. The
discretization from \Cref{sc:methods} applies directly to this equation and can
be written as
\begin{equation} \label{eq:error:discrete}
y^-_{n + 1}
- h_{n + 1} f(t_{n + 1}, y_{n + 1})
=
y_n^+
- h_{n + 1} \sum_{k = 0}^{n - 1}
    d_{n + 1, k} \frac{y^-_{k + 1} - y_k^+}{\Delta t_k}.
\end{equation}

We recall that the standard L1 method has an order of $\mathcal{O}(h^{2 - \alpha})$
on a uniform mesh of spacing $h$ for sufficiently smooth functions. On a
non-uniform mesh, the analysis is provided by \cite[Theorem 4.2]{li2019theory}
with $y \in C^2([0, 2])$. An extension of the results from \cite{li2019theory}
to the case of impulsive systems is trivial when the spike times $\{\tau_m\}$ are
known a priori and the grid $\{t_n\}$ incorporates the discontinuities.
We first define
\[
\Delta t_{\text{max}} \triangleq \max_{0 \le n < N} \Delta t_n
\quad \text{and} \quad
\Delta t_{\text{min}} \triangleq \min_{0 \le n < N} \Delta t_n,
\]
and assume that $C_{\Delta t} = \Delta t_{\text{max}}/{\Delta t_{\text{min}}}$ is
finite, but can be quite large in practice (e.g. due to the exponential
adaptation of \eqref{eq:model:fde}). We also denote by $y(t_n^\pm)$ and $y_n^\pm$
the known continuous solution and the numerical solution, respectively.

\begin{lemma}
Let $0 < \alpha < 1$ and $y \in \mathrm{PAC}^2([0, T])$, with a countable set
of discontinuities $\tau_1 < \cdots < \tau_m < \cdots < \tau_M$. Let the numerical
grid $0 = t_0 < \cdots < t_n < \cdots t_{N} = T$ be such that for every $m$ there
exists an $n$ such that $\tau_m = t_n$. Then, for every $n$, it holds that
\begin{equation}
\od{^\alpha y}{t^\alpha}(t_{n + 1}) =
\sum_{k = 0}^n d_{n + 1, k}
    \frac{y(t_{k + 1}^-) - y(t_k^+)}{\Delta t_k} + R_{n + 1},
\end{equation}
and
\begin{equation}
|R_{n + 1}|\leq \frac{1}{\Gamma(1 - \alpha)} \left[
\frac{\Delta t_n^{2 - \alpha}}{2 (1 - \alpha)}
+ \frac{\Delta t_{\text{max}}^2 \Delta t_n^{-\alpha}}{8}
\right] \esssup_{0 \leq s \leq t_{n + 1}}|y''(s)|.
\end{equation}
\end{lemma}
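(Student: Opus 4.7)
The plan is to start from the piecewise Caputo definition \eqref{eq:fractional:caputo} and exploit the assumption that the grid absorbs every discontinuity $\tau_m$. Because of this grid alignment, the integrals $\int_{\tau_j}^{\tau_{j+1}}$ in \eqref{eq:fractional:caputo} collapse into a plain telescoping sum of grid-aligned integrals, so that
\[
\od{^\alpha y}{t^\alpha}(t_{n+1})
= \frac{1}{\Gamma(1-\alpha)}\sum_{k=0}^{n}
    \int_{t_k}^{t_{k+1}} \frac{y'(s)}{(t_{n+1}-s)^\alpha}\dx[s],
\]
and on each subinterval $y$ is twice differentiable with $|y''|\leq \esssup|y''|$. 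The L1 weights $d_{n+1,k}$ in \eqref{eq:methods:l1_weights} are exactly $\int_{t_k}^{t_{k+1}}(t_{n+1}-s)^{-\alpha}\dx[s]/\Gamma(1-\alpha)$, so subtracting the discrete formula from this exact expression produces
\[
R_{n+1}
= \frac{1}{\Gamma(1-\alpha)}\sum_{k=0}^{n}
    \int_{t_k}^{t_{k+1}} \frac{y'(s)-L_k'(s)}{(t_{n+1}-s)^\alpha}\dx[s],
\]
where $L_k$ is the linear interpolant of $y$ at $(t_k, y(t_k^+))$ and $(t_{k+1}, y(t_{k+1}^-))$. The task reduces to bounding each of these integrals.

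For the memory intervals $k<n$, I would integrate by parts and use that $L_k$ matches the one-sided limits of $y$ at the endpoints, which kills the boundary terms, yielding
\[
\int_{t_k}^{t_{k+1}} \frac{y'(s)-L_k'(s)}{(t_{n+1}-s)^\alpha}\dx[s]
= -\alpha \int_{t_k}^{t_{k+1}} \frac{y(s)-L_k(s)}{(t_{n+1}-s)^{\alpha+1}}\dx[s].
\]
Then I would apply the standard quadratic interpolation bound $|y(s)-L_k(s)| \leq \tfrac{1}{8}\Delta t_k^2\esssup|y''|\leq \tfrac{1}{8}\Delta t_{\max}^2\esssup|y''|$ and telescope the remaining kernel integral
\[
\sum_{k=0}^{n-1}\int_{t_k}^{t_{k+1}}(t_{n+1}-s)^{-\alpha-1}\dx[s]
= \frac{1}{\alpha}\bigl[\Delta t_n^{-\alpha}-t_{n+1}^{-\alpha}\bigr]
\leq \frac{\Delta t_n^{-\alpha}}{\alpha},
\]
so that the $\alpha$ factor cancels and exactly the second term $\Delta t_{\max}^2\Delta t_n^{-\alpha}/8$ of the bound emerges.

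The delicate piece is the last interval $[t_n,t_{n+1}]$, where integration by parts is awkward because of the singularity at the upper endpoint. Here I would instead bound $y'(s)-L_n'(s)$ directly. Using the mean value representation $L_n'(s) = \Delta t_n^{-1}\int_{t_n}^{t_{n+1}} y'(u)\dx[u]$, I can express the discrepancy as a double integral of $y''$, giving $|y'(s)-L_n'(s)|\leq \tfrac{1}{2}\Delta t_n\esssup|y''|$ after averaging, and then
\[
\left|\int_{t_n}^{t_{n+1}}\frac{y'(s)-L_n'(s)}{(t_{n+1}-s)^\alpha}\dx[s]\right|
\leq \frac{1}{2}\Delta t_n\esssup|y''|\cdot\frac{\Delta t_n^{1-\alpha}}{1-\alpha}
= \frac{\Delta t_n^{2-\alpha}}{2(1-\alpha)}\esssup|y''|,
\]
which is exactly the first term. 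Combining the two contributions and pulling out the $1/\Gamma(1-\alpha)$ factor completes the proof.

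The main obstacle is really the last interval: getting the sharp constant $1/(2(1-\alpha))$ rather than $1/(1-\alpha)$ requires the averaging trick above, since the naive bound $|y'(s)-L_n'(s)|\leq \Delta t_n\esssup|y''|$ loses a factor of two. Everything else is standard L1 bookkeeping made transparent by the grid-alignment assumption, which is what removes the need for any decomposition into smooth pieces between discontinuities.
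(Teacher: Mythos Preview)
Your argument is correct and is precisely the standard L1 truncation analysis; the paper does not supply its own proof here but simply defers to \cite[Theorem~4.2]{li2019theory}, and what you have written is essentially that argument carried out in the piecewise setting. The only thing worth noting is that the grid-alignment hypothesis is what makes the boundary terms in your integration by parts vanish (since $y(t_k^+)=L_k(t_k)$ and $y(t_{k+1}^-)=L_k(t_{k+1})$ by construction), so the proof reduces verbatim to the continuous case on each subinterval---which is exactly why the paper is content to cite the reference.
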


\begin{proof}
See proof of \cite[Theorem 4.2]{li2019theory}.

\qed
\end{proof}

We now consider the case of interest, where the spike times $\{\tau_m\}$ are
solution-dependent. Then, a grid $\{t_k\}$ that conforms to the piecewise
nature of the solution cannot be constructed a priori. However, a similar
error estimate can be constructed for this class of impulsive systems. We will
assume that the numerical spike times are a first-order approximation of the
real spike times $\{\tau_m\}$, to match the construction from \Cref{sc:methods}. The
construction of higher-order methods for models with exponential growth is left
for future study.

We start by giving an estimate for truncation error below.

\begin{theorem}[Piecewise L1 Truncation Error] \label{thm:error:truncation_error}
Let $0 < \alpha < 1$ and $y \in \mathrm{PAC}^2([0, T])$, with a countable set
of discontinuities $\tau_1 < \cdots < \tau_m < \cdots < \tau_M$. Let the numerical
grid $0 = t_0 < \cdots < t_n < \cdots < t_{N} = T$ be such that for every $m$
there exists an $n$ such that $\tau_m \in [t_{n - 1}, t_{n + 1}]$. Then, for
$t_{n + 1} \in [\tau_m, \tau_{m + 1}]$, it holds that
\begin{equation}
\begin{aligned}
\od{^\alpha y}{t^\alpha}(t_{n + 1}) & =
\sum_{k = 0}^n d_{n + 1, k} \frac{y(t_{k + 1}^-) - y(t_k^+)}{t_{k + 1} - t_k} \\
& - \frac{1}{\Gamma(1 - \alpha)}
  \sum_{j = 1}^{m - 1} \frac{y(\tau_j^+) - y(\tau_j^-)}{(t_{n + 1} - \tau_j)^\alpha}
+ R_{n + 1},
\end{aligned}
\end{equation}
where
\begin{equation}
|R_{n + 1}|\leq \frac{1}{\Gamma(1 - \alpha)} \left(
c_1 \Delta t_n^{1 - \alpha} + c_2 \Delta t_{\text{max}} \Delta t_n^{-\alpha}
\right).
\end{equation}
\end{theorem}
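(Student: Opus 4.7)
The plan is to start from the piecewise definition of the Caputo derivative given in \eqref{eq:fractional:caputo} and rewrite the integrals so that they are expressed over the numerical grid rather than over the true partition induced by $\{\tau_j\}$. The mismatch between the true spike partition and the numerical grid is precisely the source of the extra explicit correction term in the statement, so a careful bookkeeping of this mismatch is the core of the argument.

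First I would write
\[
\od{^\alpha y}{t^\alpha}(t_{n+1}) = \frac{1}{\Gamma(1-\alpha)} \sum_{j=0}^{m-1} \int_{\tau_j}^{\tau_{j+1}} \frac{y'(s)}{(t_{n+1} - s)^\alpha} \dx[s] + \frac{1}{\Gamma(1-\alpha)}\int_{\tau_m}^{t_{n+1}} \frac{y'(s)}{(t_{n+1}-s)^\alpha}\dx[s],
\]
and, for each $j$, identify the grid index $n_j$ with $\tau_j \in [t_{n_j - 1}, t_{n_j + 1}]$. I would then split each integral $\int_{\tau_j}^{\tau_{j+1}}$ as $\int_{t_{n_j}}^{t_{n_{j+1}}}$ plus correction integrals over the small intervals between $\tau_j$ and $t_{n_j}$, and apply the linear reconstruction of $y$ on each grid subinterval $[t_k, t_{k+1}]$ that lies in the continuity region of $y$. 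This is exactly the situation covered by the argument behind \cite[Theorem 4.2]{li2019theory}, and it produces the standard L1 sum $\sum_{k=0}^{n} d_{n+1,k}(y(t_{k+1}^-) - y(t_k^+))/\Delta t_k$ together with the usual non-uniform-mesh remainder of order $\Delta t_n^{1-\alpha}$ from the last interval and $\Delta t_{\text{max}}^2 \Delta t_n^{-\alpha}$ from the trapezoidal error on the earlier intervals.

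Next I would isolate the correction integrals $\int$ between $\tau_j$ and $t_{n_j}$ (one for each interior spike). On such an interval $y$ is smooth, but crossing $\tau_j$ changes the value by the jump $y(\tau_j^+)-y(\tau_j^-)$; accounting properly means rewriting
\[
\int_{\tau_j}^{t_{n_j}} \frac{y'(s)}{(t_{n+1}-s)^\alpha}\dx[s]
\]
plus its analogue on the other side, freezing the kernel at $s = \tau_j$ via Taylor expansion $(t_{n+1}-s)^{-\alpha} = (t_{n+1}-\tau_j)^{-\alpha} + \mathcal{O}(\Delta t_{\text{max}}(t_{n+1}-\tau_j)^{-\alpha-1})$, and then observing that the resulting boundary terms generate precisely $-(y(\tau_j^+)-y(\tau_j^-))/(t_{n+1}-\tau_j)^\alpha$. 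The residual after extracting this explicit term is first order in $\Delta t_{\text{max}}$ because the spike-time approximation is first order, which is where the $c_2 \Delta t_{\text{max}} \Delta t_n^{-\alpha}$ contribution arises (with the singular factor coming from the $j = m-1$ term nearest to $t_{n+1}$, or from the open interval $(\tau_m, t_{n+1})$).

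The main obstacle, I expect, is keeping the singular factor $\Delta t_n^{-\alpha}$ under control when the most recent (numerical) spike $t_{n_m}$ differs from the true spike $\tau_m$ by as much as $\Delta t_{\text{max}}$: one has to show that the error made by replacing $\tau_m$ by its numerical proxy in the lowest-order term is $\mathcal{O}(\Delta t_{\text{max}} \Delta t_n^{-\alpha})$ rather than worse, and that the pre-factor does not pick up additional growth in $m$. Once that estimate is in hand, summing all contributions and bounding $\esssup |y''|$ on each continuity piece gives the claimed $|R_{n+1}| \le (c_1 \Delta t_n^{1-\alpha} + c_2 \Delta t_{\text{max}} \Delta t_n^{-\alpha})/\Gamma(1-\alpha)$, where $c_1, c_2$ absorb $\esssup |y''|$, $C_{\Delta t}$, and the total number of spikes up to $t_{n+1}$.
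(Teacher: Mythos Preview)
Your overall strategy---start from \eqref{eq:fractional:caputo}, push the integrals onto the numerical grid, and isolate the mismatch near each $\tau_j$ as the source of the explicit jump correction---is sound and will lead to the stated bound. However, the route differs from the paper's in one essential technical choice, and your write-up leaves a small but real gap at exactly that point.

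The paper does \emph{not} freeze the kernel near the spikes. Instead it integrates by parts on the whole of $[0,t_n]$, piece by piece over $[\tau_j,\tau_{j+1}]$. The jump corrections $-(y(\tau_j^+)-y(\tau_j^-))/(t_{n+1}-\tau_j)^\alpha$ then appear \emph{exactly} as the non-telescoping boundary terms of the IBP, with no Taylor expansion needed. What remains is $\alpha\int_0^{t_n} y(s)(t_{n+1}-s)^{-1-\alpha}\dx[s]$, and it is $y$ (not $y'$) that is linearly interpolated on each grid cell; on a cell containing a true spike the interpolation error for a discontinuous function is only $\mathcal{O}(\Delta t_k)$ (via \cite{Nguyen2009}), which is where the first-order loss enters. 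This cleanly separates the exact jump extraction from the approximation error.

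In your approach, freezing the kernel at $\tau_j$ on a spike cell $[t_k,t_{k+1}]$ yields
\[
\frac{y(t_{k+1}^-)-y(t_k^+)-\bigl(y(\tau_j^+)-y(\tau_j^-)\bigr)}{\Gamma(1-\alpha)\,(t_{n+1}-\tau_j)^{\alpha}} + \text{(Taylor remainder)},
\]
so you do recover the jump term, but the coefficient in front of $y(t_{k+1}^-)-y(t_k^+)$ is $(t_{n+1}-\tau_j)^{-\alpha}/\Gamma(1-\alpha)$, not the L1 weight $d_{n+1,k}/\Delta t_k$. You still have to show these agree to $\mathcal{O}(\Delta t_k)$ (they do---this is the expansion the paper later uses in the proof of \Cref{thm:error:estimate}), and since $y(t_{k+1}^-)-y(t_k^+)$ is $\mathcal{O}(1)$ on a spike cell, that replacement contributes another first-order term you have not accounted for. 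Once you add that step, your argument closes; the paper's IBP simply avoids it by producing the L1 weights and the exact jump terms from separate mechanisms.
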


\begin{proof}
The derivation of the truncation error for the case where the spike times are not
known follows ideas similar to the proof of \cite[Theorem 4.2]{li2019theory}. We
show here the proof and highlight the differences. From \Cref{sc:fractional}, we
have that
\[
\od{^\alpha y}{t^\alpha}(t_{n + 1}) =
\frac{1}{\Gamma(1 - \alpha)} \left[
\sum_{j = 0}^{m - 1}
    \int_{\tau_j}^{\tau_{j + 1}} \frac{y'(s)}{(t_{n + 1} - s)^{\alpha}} \dx[s]
    + \int_{\tau_m}^{t_{n + 1}} \frac{y'(s)}{(t_{n + 1} - s)^{\alpha}} \dx[s]
\right],
\]
for $t_{n + 1} \in (\tau_m, \tau_{m + 1}]$. For the current time step,
we must separately consider the case where $\tau_m \in [t_n, t_{n + 1}]$ and
$\tau_m \notin [t_n, t_{n + 1}]$. For simplicity, we only expand on the first
case, where there is a discontinuity in the current interval. Then, we have that
\[
\begin{aligned}
& \sum_{j = 0}^{m - 1}
    \int_{\tau_j}^{\tau_{j + 1}} \frac{y'(s)}{(t_{n + 1} - s)^{\alpha}} \dx[s]
    + \int_{\tau_m}^{t_{n + 1}} \frac{y'(s)}{(t_{n + 1} - s)^{\alpha}} \dx[s] \\
=\,\, & \sum_{j = 0}^{m - 2}
    \int_{\tau_j}^{\tau_{j + 1}} \frac{y'(s)}{(t_{n + 1} - s)^{\alpha}} \dx[s]
    + \int_{\tau_{m - 1}}^{t_n} \frac{y'(s)}{(t_{n + 1} - s)^{\alpha}} \dx[s] \\
+\,\, & \int_{t_n}^{\tau_m} \frac{y'(s)}{(t_{n + 1} - s)^{\alpha}} \dx[s]
    + \int_{\tau_m}^{t_{n + 1}} \frac{y'(s)}{(t_{n + 1} - s)^{\alpha}} \dx[s].
\end{aligned}
\]

On $[t_n, t_{n + 1}]$ we can use a simple Taylor expansion to obtain a bound for
the derivative. Due to the discontinuity, the expansion can only be $\mathcal{O}(1)$,
which gives
\[
\begin{aligned}
& \int_{t_n}^{\tau_m} \frac{y'(s)}{(t_{n + 1} - s)^{\alpha}} \dx[s]
+ \int_{\tau_m}^{t_{n + 1}} \frac{y'(s)}{(t_{n + 1} - s)^{\alpha}} \dx[s] \\
=\,\, & \tilde{d}_{n + 1, n} \frac{y(t_{n + 1}^-) - y(t_n^+)}{t_{n + 1} - t_n}
+ \int_{t_n}^{t_{n + 1}} \frac{R_{n, n + 1}(s)}{(t_{n + 1} - s)^\alpha} \dx[s],
\end{aligned}
\]
where $\tilde{d}_{n + 1, n} = \Gamma(1 - \alpha) d_{n + 1, n}$ and $|R_{n, n + 1}'|
< c$. If $\tau_m < t_n$, then the estimate $|R_{n, n + 1}| < c \Delta t_n$
from \cite{li2019theory} should be used. For the remaining intervals, we use the
standard linear Lagrange interpolator of the L1 method. If the interval
$[t_k, t_{k + 1}]$ contains a discontinuity, the interpolation error is given
by
\[
y(s) =
    \frac{t_{k + 1} - s}{t_{k + 1} - t_k} y(t_k)
    + \frac{s - t_k}{t_{k + 1} - t_k} y(t_{k + 1})
    + R_{k, k + 1}(\xi),
\]
for $\xi \in [t_k, t_{k + 1}]$, where $|R_{k, k + 1}| < c \Delta t_k$. A proof
of this bound is found, for example, in \cite[Lemma 5]{Nguyen2009}. If the
solution is continuous in the interval, then the bound from \cite{li2019theory}
applies again. Introducing this estimate into the integral, we have that
\[
\begin{aligned}
\alpha \int_{t_k}^{t_{k + 1}}
    \frac{y(s)}{(t_{n + 1} - s)^{1 + \alpha}} \dx[s] & =
\left.\frac{y(s)}{(t_{n + 1} - s)^\alpha}\right|_{t_k}^{t_{k + 1}}
- \tilde{d}_{n + 1, k} \frac{y(t_{k + 1}^-) - y(t_k^+)}{t_{k + 1} - t_k} \\
&
+ \int_{t_k}^{t_{k + 1}} \!\!\!\!
    \frac{R_{k, k + 1}}{(t_{n + 1} - s)^{1 + \alpha}} \dx[s].
\end{aligned}
\]

This expression can be used to give an error estimate for the remaining terms.
We have, by integration by parts, that
\[
\begin{aligned}
& \sum_{j = 0}^{m - 2}
    \int_{\tau_j}^{\tau_{j + 1}} \frac{y'(s)}{(t_{n + 1} - s)^{\alpha}} \dx[s]
    + \int_{\tau_{m - 1}}^{t_n} \frac{y'(s)}{(t_{n + 1} - s)^{\alpha}} \dx[s] \\
=\,\, &
\sum_{j = 0}^{m - 2} \left[
\left. \frac{y(s)}{(t_{n + 1} - s)^\alpha}\right|_{\tau_j}^{\tau_{j + 1}}
- \alpha \int_{\tau_j}^{\tau_{j + 1}} \frac{y(s)}{(t_{n + 1} - s)^{1 + \alpha}} \dx[s]
\right] \\
+\,\, &
\left. \frac{y(s)}{(t_{n + 1} - s)^\alpha}\right|_{\tau_{m - 1}}^{t_n}
- \alpha \int_{\tau_{m - 1}}^{t_n} \frac{y(s)}{(t_{n + 1} - s)^{1 + \alpha}} \dx[s] \\
=\,\, &
\left. \frac{y(s)}{(t_{n + 1} - s)^\alpha}\right|_{\tau_0}^{t_n}
- \sum_{j = 1}^{m - 1} \frac{y(\tau_j^+) - y(\tau_j^-)}{(t_{n + 1} - \tau_j)^\alpha}
- \alpha \int_{\tau_0}^{t_n} \frac{y(s)}{(t_{n + 1} - s)^{1 + \alpha}} \dx[s] \\
=\,\, &
\sum_{k = 0}^{n - 1} \tilde{d}_{n + 1, k} \frac{y(t_{k + 1}^-) - y(t_k^+)}{t_{k + 1} - t_k}
- \sum_{j = 1}^{m - 1} \frac{y(\tau_j^+) - y(\tau_j^-)}{(t_{n + 1} - \tau_j)^\alpha}
- \sum_{k = 0}^{n - 1}
    \int_{t_k}^{t_{k + 1}} \frac{R_{k, k + 1}(s)}{(t_{n + 1} - s)^{1 + \alpha}} \dx[s].
\end{aligned}
\]

The remainder can be determined by putting the results on $[0, t_n]$ and
$[t_n, t_{n + 1}]$ together to give
\[
\begin{aligned}
|R_{n + 1}| & =
\frac{1}{\Gamma(1 - \alpha)} \left|
\int_{t_n}^{t_{n + 1}} \frac{R_{n, n + 1}(s)}{(t_{n + 1} - s)^\alpha} \dx[s]
- \sum_{k = 0}^{n - 1} \int_{t_k}^{t_{k + 1}}
    \frac{R_{k, k + 1}(s)}{(t_{n + 1} - s)^{1 + \alpha}} \dx[s]
\right| \\
& \le \frac{1}{\Gamma(1 - \alpha)}
    (c_1 \Delta t_n^{1 - \alpha} + c_2 \Delta t_{\text{max}} \Delta t_n^{-\alpha}).
\end{aligned}
\]
\end{proof}

As we have seen in \Cref{thm:error:truncation_error}, the truncation error
for the method is $\mathcal{O}(1)$ and, thus, non-convergent. This is
not surprising, due to the fact that the grid $\{t_n\}$ does not match the
spike times $\{\tau_m\}$. However, the numerical convergence of the method is
expected to be first order. We define the error as $e_n^\pm = y(t_n^\pm) - y_n^\pm$
and provide an estimate below.

\begin{theorem}[Piecewise L1 Global Error] \label{thm:error:estimate}
Let the right-hand side $f(t, y)$ from \eqref{eq:error:model} be Lipschitz
continuous in its second argument, i.e.
\[
|f(t, x) - f(t, y)| < L |x - y|,
\]
where $L > 0$ is the Lipschitz constant. Let $\{y_n^\pm\}$ be the numerical
solution of \eqref{eq:error:discrete} on a grid $0 = t_0 < \cdots < t_n < \cdots
< t_N = T$ such that $\Gamma(2 - \alpha) L \Delta t_n^\alpha < 1$ for every $n$.
Then, the error satisfies
\[
|e^-_{n + 1}| < c \Delta t_{\text{max}}.
\]
\end{theorem}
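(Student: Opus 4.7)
The plan is to derive an error recursion by subtracting the discrete scheme \eqref{eq:error:discrete} from the representation of the exact solution furnished by Theorem~\ref{thm:error:truncation_error}, and then to close the estimate using a discrete fractional Grönwall inequality. Rewriting \eqref{eq:error:discrete} in the equivalent form $\sum_{k=0}^n d_{n+1,k}(y_{k+1}^- - y_k^+)/\Delta t_k = f(t_{n+1}, y_{n+1}^-)$ (noting that $d_{n+1,n}/\Delta t_n = 1/h_{n+1}$) and subtracting the analogous identity for the continuous solution, with $e_n^\pm \triangleq y(t_n^\pm) - y_n^\pm$, yields
\[
\sum_{k=0}^n d_{n+1,k} \frac{e_{k+1}^- - e_k^+}{\Delta t_k}
= f(t_{n+1}, y(t_{n+1}^-)) - f(t_{n+1}, y_{n+1}^-) + (J_{n+1} - \tilde{J}_{n+1}) + R_{n+1},
\]
where $J_{n+1}$ is the continuous jump correction of Theorem~\ref{thm:error:truncation_error} and $\tilde{J}_{n+1}$ is its discrete counterpart produced at the approximated spike times.

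Next, I would isolate $e_{n+1}^-$ by peeling off the $k=n$ contribution and invoking the Lipschitz bound on $f$. The hypothesis $\Gamma(2-\alpha) L \Delta t_n^\alpha = h_{n+1} L < 1$ guarantees that $1 - h_{n+1} L$ is strictly positive, so after taking absolute values I obtain an inequality of the form
\[
|e_{n+1}^-|
\le C_1 \sum_{k=0}^{n-1} h_{n+1}\, \frac{d_{n+1,k}}{\Delta t_k}\bigl(|e_{k+1}^-| + |e_k^+|\bigr)
+ C_2 h_{n+1}\bigl(|R_{n+1}| + |J_{n+1} - \tilde{J}_{n+1}|\bigr).
\]
A telescoping argument on the weights $d_{n+1,k}$ shows that the memory sum is bounded uniformly in $n$ by a multiple of $T^{1-\alpha}/\Gamma(2-\alpha)$, and Theorem~\ref{thm:error:truncation_error} together with $h_{n+1}\sim \Delta t_n^\alpha$ gives $h_{n+1}|R_{n+1}| = \mathcal{O}(\Delta t_n) + \mathcal{O}(\Delta t_{\max})$. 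The relation between $e_{n+1}^+$ and $e_{n+1}^-$ is either trivial (no spike) or, when a spike is detected, the prescribed jump amplitudes match in the continuous and discrete solutions, so that $e_{n+1}^+ - e_{n+1}^-$ reduces to the effect of a first-order timing perturbation of $\mathcal{O}(\Delta t_{\max})$.

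Applying a discrete fractional Grönwall lemma in the style of \cite{li2019theory} then delivers the bound $|e_{n+1}^-| \le c \Delta t_{\max}$ by induction on $n$, provided the data term on the right is already $\mathcal{O}(\Delta t_{\max})$. The main obstacle I anticipate is controlling the jump-mismatch term $J_{n+1} - \tilde{J}_{n+1}$: the continuous and numerical spikes share the same amplitude but are offset in time by $\mathcal{O}(\Delta t_{\max})$, so the estimate must exploit the $\mathrm{PAC}^2$ regularity of $y$ on each smooth piece together with the grid assumption $\tau_m \in [t_{n-1}, t_{n+1}]$ of Theorem~\ref{thm:error:truncation_error}, which keeps the singular kernel $(t_{n+1} - \tau_j)^{-\alpha}$ bounded for every past spike index $j < m$ at the evaluation point. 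Without that separation between $t_{n+1}$ and the perturbed spike times, the timing error could accumulate at a rate incompatible with first-order convergence.
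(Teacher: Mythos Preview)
Your overall architecture (subtract the exact identity of Theorem~\ref{thm:error:truncation_error} from the scheme, isolate $e_{n+1}^-$, apply a discrete Gr\"onwall inequality) matches the paper, but there is a genuine gap in the treatment of the jump terms.

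The claim that ``$e_{n+1}^+ - e_{n+1}^-$ reduces to the effect of a first-order timing perturbation of $\mathcal{O}(\Delta t_{\max})$'' is incorrect. At a numerical spike time $t_k$ the exact solution is continuous (its discontinuity sits at $\tau_{j_k}\neq t_k$), so $y(t_k^+)=y(t_k^-)$, whereas $y_k^+-y_k^-$ equals the full prescribed jump $\Delta y$. Hence $e_k^+-e_k^- = -(y(\tau_{j_k}^+)-y(\tau_{j_k}^-))$ is $\mathcal{O}(1)$, not $\mathcal{O}(\Delta t_{\max})$. Relatedly, there is no term $\tilde J_{n+1}$ in the discrete scheme \eqref{eq:error:discrete}; subtracting as you describe leaves the full continuous jump sum $J_{n+1}$ on the right-hand side, uncancelled.

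The paper's mechanism is precisely to exploit these $\mathcal{O}(1)$ differences. Substituting $e_k^+ = e_k^- - (y(\tau_{j_k}^+)-y(\tau_{j_k}^-))$ at each numerical spike into the memory sum $\sum_k d_{n+1,k}(e_{k+1}^--e_k^+)/\Delta t_k$ generates the sum $\sum_j \tfrac{d_{n+1,k_j}}{\Delta t_{k_j}}(y(\tau_j^+)-y(\tau_j^-))$; a Taylor expansion of the weight then gives
\[
\frac{d_{n+1,k_j}}{\Delta t_{k_j}}=\frac{1}{\Gamma(1-\alpha)}\,\frac{1}{(t_{n+1}-\tau_j)^{\alpha}}+\mathcal{O}(\Delta t_{k_j}),
\]
so this manufactured sum cancels $J_{n+1}$ to first order. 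In other words, the object you call $\tilde J_{n+1}$ must be \emph{created} from the very $e_k^+-e_k^-$ terms you dismissed as small; once you see this, the separate smallness claim is both wrong and unnecessary. After that cancellation the paper rearranges the recursion into the form $\sum_k\bigl(\tfrac{d_{n+1,k-1}}{\Delta t_{k-1}}-\tfrac{d_{n+1,k}}{\Delta t_k}\bigr)e_k^-$, bounds the coefficients via $C_{\Delta t}^\alpha$ (not via a telescoping bound on $\sum_k d_{n+1,k}$, which does not directly control $d_{n+1,k}/\Delta t_k$), and applies Gr\"onwall.
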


\begin{proof}
To obtain an expression for the error, we subtract the residual equation
from \Cref{thm:error:truncation_error} from the discretization \eqref{eq:methods:l1_naive}.
This gives
\[
\begin{aligned}
& e^-_{n + 1} - h_{n + 1} [f(t_{n + 1}, y(t_{n + 1}^-)) - f(t_{n + 1}, y_{n + 1}^-)] \\
=\,\, & e^+_n
- h_{n + 1} \sum_{k = 0}^{n - 1} d_{n + 1, k} \frac{e^-_{k + 1} - e^+_k}{\Delta t_k}
+ \frac{h_{n + 1}}{\Gamma(1 - \alpha)}
    \sum_{j = 1}^{m - 1} \frac{y(\tau_j^+) - y(\tau_j^-)}{(t_{n + 1} - \tau_j)^\alpha}
- h_{n + 1} R_{n + 1}, \\
=\,\, &
\left[
1 - \frac{h_{n + 1} d_{n + 1, n - 1}}{\Delta t_{n - 1}}
\right] e_n^-
- h_{n + 1} \sum_{k = 1}^{n - 1} \left[
\frac{d_{n + 1, k - 1}}{\Delta t_{k - 1}} e_k^-
- \frac{d_{n + 1, k}}{\Delta t_k} e_k^+
\right] \\
+\,\, &
\frac{h_{n + 1}}{\Gamma(1 - \alpha)}
\sum_{j = 1}^{m - 1} \frac{y(\tau_j^+) - y(\tau_j^-)}{(t_{n + 1} - \tau_j)^\alpha}
- h_{n + 1} R_{n + 1}
\end{aligned}
\]

\begin{figure}[ht!]
\centering
\begin{tikzpicture}[scale=0.9]
% draw time axis
\draw[->] (0, -1) -- (9, -1);
\draw[fill] (1, -1) circle [radius=0.05] node [below] {$t_{k - 1}$};
\draw[fill] (4, -1) circle [radius=0.05] node [below] {$t_k$};
\draw[fill] (7, -1) circle [radius=0.05] node [below] {$t_{k + 1}$};
\draw[fill] (6, -1) circle [radius=0.05] node [below] {$\tau_{j_k}$};

% draw approximation
% \draw[scale=0.5, domain=2:14, smooth, variable=\t, thick, black]
%     plot ({\t}, {(\t*\t-4)/24});
% \draw[scale=0.5, domain=14:16, smooth, variable=\t, thick, black]
%     plot ({\t}, {(\t*\t-130)/34});
\draw[thick] (1, 0) -- (4, 1.25) -- (7, 4);
\draw[thick] (7, 1) -- (8, 2);
\draw[dashed,fill]
    (1, -1)
    -- (1, 0) circle [radius=0.05] node [above] {$y_{k - 1}^\pm$};
\draw[dashed,fill]
    (4, -1)
    -- (4, 1.25) circle [radius=0.05] node [above] {$y_k^\pm$};
\draw[dashed,fill]
    (7, -1)
    -- (7, 1) circle [radius=0.05] node [left] {$y_{k + 1}^+$}
    -- (7, 4) circle [radius=0.05] node [right] {$y_{k + 1}^-$};

% draw exact
\draw[scale=0.5, domain=2:12, smooth, variable=\t, gray] plot ({\t}, {(\t*\t-24)/24});
\draw[scale=0.5, domain=12:16, smooth, variable=\t, gray] plot ({\t}, {(\t*\t-150)/34});
\draw[dashed,fill,gray]
    (6, -1)
    -- (6, -0.09) circle [radius=0.05] node [left] {$y(\tau_j^+)$}
    -- (6, 2.5) circle [radius=0.05] node [right] {$y(\tau_j^-)$};
\draw[fill, gray] (7, 0.67) circle [radius=0.05] node [right] {$y(t_{k + 1}^\pm)$};
\end{tikzpicture}
\caption{Approximate solution based on a linear approximation (black) and the
exact solution (gray). The approximate solution has a detected jump at $t_{k + 1}$
and the exact solution has the jump at $\tau_j \in [t_{k - 1}, t_{k + 1}]$.}
\label{fig:error:discontinuities}
\end{figure}

We must now manipulate the errors such that the jump terms cancel out to
first order. In general, given a numerical spike time at $t_k$, we assume that the
real spike time $\tau_{j_k} \in [t_{k - 1}, t_{k + 1}]$, as shown
in \Cref{fig:error:discontinuities}. This can always be achieved if the step
size is sufficiently small. Then, for each numerical spike time $t_k$, we can write
\[
e_k^+ = (y(t_k^+) - y_k^+) = e_k^- - (y(\tau_{j_k}^+) - y(\tau_{j_k}^-)),
\]
and introduce this expression into the error estimate
\[
\begin{aligned}
& h_{n + 1} \sum_{k = 1}^{n - 1} \left[
\frac{d_{n + 1, k - 1}}{\Delta t_{k - 1}} e_k^-
- \frac{d_{n + 1, k}}{\Delta t_k} e_k^+
\right]
- \frac{h_{n + 1}}{\Gamma(1 - \alpha)}
\sum_{j = 1}^{m - 1} \frac{y(\tau_j^+) - y(\tau_j^-)}{(t_{n + 1} - \tau_j)^\alpha} \\
=\,\, &
h_{n + 1} \sum_{k = 1}^{n - 1} \left[
\frac{d_{n + 1, k - 1}}{\Delta t_{k - 1}}
- \frac{d_{n + 1, k}}{\Delta t_k}
\right] e_k^- \\
+\,\, &
h_{n + 1}
\sum_{j = 1}^{m - 1} \frac{d_{n + 1, k_j}}{\Delta t_{k_j}} (y(\tau_j^+) - y(\tau_j^-))
-
\frac{h_{n + 1}}{\Gamma(1 - \alpha)}
\sum_{j = 1}^{m - 1} \frac{y(\tau_j^+) - y(\tau_j^-)}{(t_{n + 1} - \tau_j)^\alpha},
\end{aligned}
\]
where $k_j$ denotes the numerical spike time $t_{k_j}$ corresponding to $\tau_j$,
as denoted above. Now, we consider the weights in the second sum, as
defined by \eqref{eq:methods:l1_weights}, and perform the expansions
\[
\begin{aligned}
(t_{n + 1} - t_k)^{1 - \alpha} & =
(t_{n + 1} - \tau_j)^{1 - \alpha} +
(1 - \alpha) (\tau_j - t_k) (t_{n + 1} - \tau_j)^{-\alpha}
+ \mathcal{O}(\Delta t_k^2), \\
(t_{n + 1} - t_{k + 1})^{1 - \alpha} & =
(t_{n + 1} - \tau_j)^{1 - \alpha}
- (1 - \alpha) (t_{k + 1} - \tau_j) (t_{n + 1} - \tau_j)^{-\alpha}
+ \mathcal{O}(\Delta t_k^2),
\end{aligned}
\]
which give
\[
\frac{d_{n + 1, k}}{\Delta t_k} =
    \frac{1}{\Gamma(1 - \alpha)} \frac{1}{(t_{n + 1} - \tau_j)^\alpha}
    + \mathcal{O}(\Delta t_k).
\]

We can see that the first term will exactly cancel out the original jump terms.
Finally, using \Cref{thm:error:truncation_error}, we have that
\[
\left|\frac{h_{n + 1}}{\Gamma(1 - \alpha)} R_{n + 1}\right| \le
    c_1 \Delta t_n + c_2 \Delta t_{\text{max}}.
\]

Therefore, we can write the error as
\[
\begin{aligned}
& e^-_{n + 1} - h_{n + 1} [f(t_{n + 1}, y(t_{n + 1}^-)) - f(t_{n + 1}, y_{n + 1}^-)] \\
=\,\, &
\left[
1 - \frac{h_{n + 1} d_{n + 1, n - 1}}{\Delta t_{n - 1}}
\right] e_n^-
- h_{n + 1} \sum_{k = 1}^{n - 1} \left[
\frac{d_{n + 1, k - 1}}{\Delta t_{k - 1}}
- \frac{d_{n + 1, k}}{\Delta t_k}
\right] e_k^- \\
-\,\, &
c_0 \sum_{j = 0}^{m - 1} \Delta t_{k_j} (y(\tau_j^+) - y(\tau_j)^-)
+ c_1 \Delta t_n
+ c_2 \Delta t_{\text{max}} \\
=\,\, &
\left[
1 - \frac{h_{n + 1} d_{n + 1, n - 1}}{\Delta t_{n - 1}}
\right] e_n^-
- h_{n + 1} \sum_{k = 1}^{n - 1} \left[
\frac{d_{n + 1, k - 1}}{\Delta t_{k - 1}}
- \frac{d_{n + 1, k}}{\Delta t_k}
\right] e_k^- + c \Delta t_{\text{max}},
\end{aligned}
\]
for some positive constants $c_0, c_1, c_2, c \in \mathbb{R}$. We note that
the last estimate can be invalid if the number of spike times becomes very large,
i.e. if $m \max (y(\tau_j^+) - y(\tau_j^-)) = \mathcal{O}(\Delta t_{\text{max}}^{-1})$.
This can be the case if the solution is evolved for a very large time horizon.

We will now apply a variant of the discrete Grönwall inequality
from \cite[Lemma 3.3.1]{li2015numerical}. For this we first require some
bounds on the coefficients of $e_n^-$. We have that
\[
\begin{aligned}
\left|
    1 - \frac{h_{n + 1} d_{n + 1, n - 1}}{\Delta t_{n - 1}}\right|
& \le 1 + \frac{\Delta t_n^\alpha}{\Delta t_{n - 1}^\alpha}
\le 1 + C_{\Delta t}^\alpha, \\
h_{n + 1} \left|
    \frac{d_{n + 1, k - 1}}{\Delta t_{k - 1}}
    - \frac{d_{n + 1, k}}{\Delta t_k}
\right| & \le \left|
\frac{\Delta t_n^\alpha}{\Delta t_{k - 1}^\alpha} -
\frac{\Delta t_n^\alpha}{\Delta t_k^\alpha}
\right| \le 2 C_{\Delta t}^\alpha.
\end{aligned}
\]

Then, taking the absolute value on both sides and using the above bounds, we
obtain
\[
(1 - L h_{n + 1}) |e_{n + 1}^-| \le
c \Delta t_{\text{max}} + \beta \sum_{k = 0}^n |e_k^-|,
\]
where $\beta \triangleq \max(1 + C_{\Delta t}^\alpha, 2 C_{\Delta t}^\alpha)$.
A direct application of the Grönwall inequality gives the desired result when
$1 - L h_{n + 1} > 0$, as initially assumed.
\end{proof}

\begin{remark}
As shown in the proof of \Cref{thm:error:estimate}, we make very few assumptions
on the form of the reset condition and the method of determining the numerical
spike times $\{t_k\}$. Therefore, the error estimate applies not only to the
\FrAdEx{} discretization from \Cref{sc:methods}, but also to a larger class of
problems. For example, in the next section, we apply the same methods to the
simpler perfect integrate-and-fire and the leaky integrate-and-fire
models, where the spike times are approximated by a direct linear interpolation
between $(t_n, V_n^+)$ and $(\hat{t}_{n + 1}, \hat{V}_{n + 1}^-)$.
\end{remark}

\begin{remark}
It is well-known that solutions to fractional differential equations (usually) have
a singularity at $t = 0$ and we cannot generally assume that
$y(t) \in \mathrm{PAC}^2([0, T])$. More realistic spaces have been analysed
for the L1 method~\cite{Jin2015} and for the L2 method~\cite{Quan2023}. In the
case of the L1 method, the global error degrades to $\mathcal{O}(\Delta t)$ due
to the singularity, as in~\Cref{thm:error:estimate}.
\end{remark}

\section{Numerical Simulations}
\label{sc:results}

We present a series of numerical experiments intended to verify and validate the
numerical method presented in \Cref{sc:methods}. The implementation is carried
out in Python with the aid of the \texttt{numpy} and \texttt{scipy}
version libraries. The numerical method itself is implemented on top of
\texttt{pycaputo}~\cite{pycaputo_0_5_0}, an open source library for fractional
calculus developed by the authors.

In the following, we will make use of the full machinery required to solve the
\FrAdEx{} model using adaptive time stepping and the implicit L1 method. All reported
errors will be relative in the $\ell^2$ norm, i.e.
\[
E(\vect{x}, \vect{x}_{\text{ref}}) \triangleq
 \frac{\|\vect{x} - \vect{x}_{\text{ref}}\|_2}{\|\vect{x}_{\text{ref}}\|_2},
\]
where $\vect{x}$ is the approximate solution and $\vect{x}_{\text{ref}}$ is a known
reference value evaluated on the $\{t_n\}$ grid. The remaining parameters are
problem-dependent and will be stated for each test case. In the text, they are
stated in their dimensional form for comparison to existing literature \cite{naud2008firing},
but the simulations are performed on the non-dimensional system from \Cref{sc:dim}.

\subsection{Convergence on a simple PIF model}
\label{ssc:results:exp1}

We start by applying the methods from \Cref{sc:methods} to a simple perfect
integrate-and-fire (PIF) model and use the results to verify the error estimates
from \Cref{sc:error}. The fractional extension of the PIF model is given by
\[
C \od{^\alpha V}{t^\alpha} = I
\quad \text{if} \quad V < V_{\text{peak}}
\quad \text{else} \quad V \gets V_r,
\]
where $C = \qty{100}{\pico\farad \milli\second^{\alpha - 1}}$ is the fractional
capacitance, $I = \qty{160}{\pico\ampere}$ is a constant current,
$V_{\text{peak}} = \qty{0}{\milli\volt}$ is the membrane threshold potential,
and $V_r = \qty{-48}{\milli\volt}$ is the reset potential. The system is
non-dimensionalized using $I_{\text{ref}} = \qty{20}{\pico\ampere}$ and
$V_{\text{ref}} = \qty{1}{\milli\volt}$. The initial condition is set to a
random value in $[V_r, V_{\text{peak}}]$, fixed for each different order $\alpha$
tested below. This problem has a known solution, which is given by
\[
V(t) = V_0 + m (V_r - V_{\text{peak}}) + \frac{t^{\alpha}}{\Gamma(1 + \alpha)} I,
\]
for $t \in (\tau_m, \tau_{m + 1}]$. The corresponding spike times also have an
explicit expression
\[
\tau_{m + 1} = \sqrt[\alpha]{\Gamma(1 + \alpha)
    \frac{V_{\text{peak}} - V_0 - m (V_r - V_{\text{peak}})}{I}}.
\]

\begin{figure}[ht!]
\begin{subfigure}{0.45\linewidth}
\centering
\includegraphics[width=0.9\textwidth]{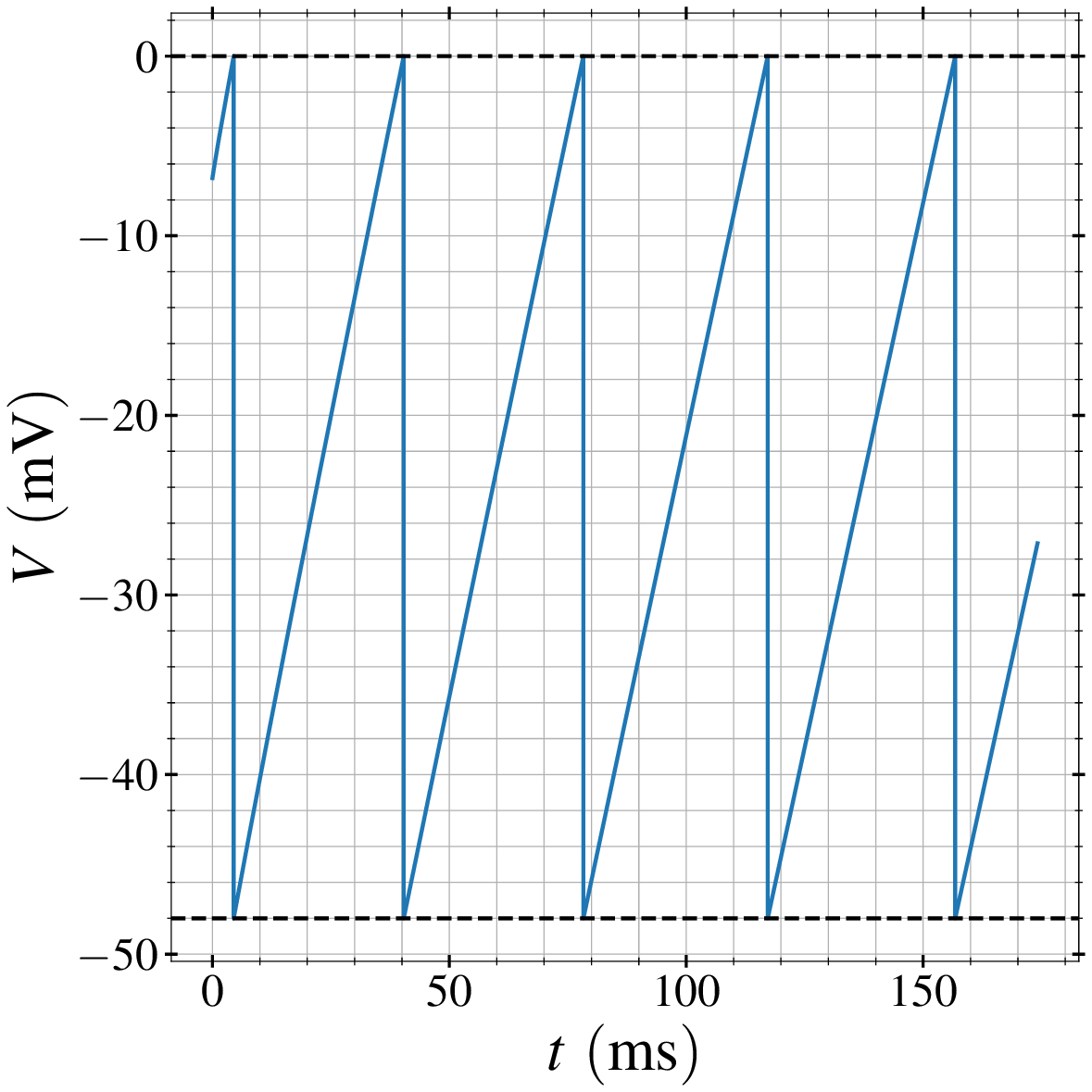}

\caption{Membrane potential at $\alpha = 0.95$.}
\end{subfigure}
\begin{subfigure}{0.45\linewidth}
\centering
\includegraphics[width=0.9\textwidth]{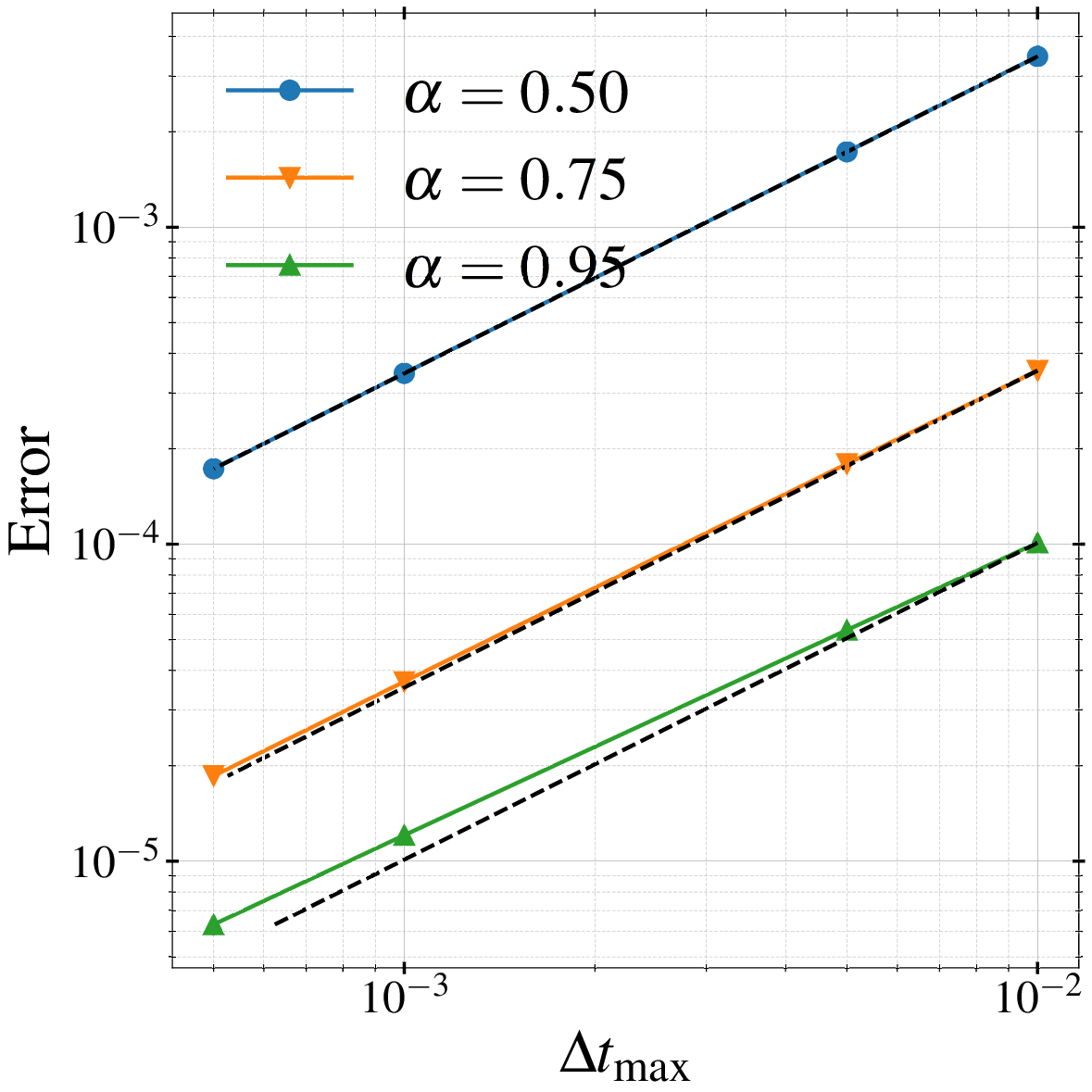}
\caption{Convergence.}
\end{subfigure}

\caption{(a) Membrane potential at $\alpha = 0.95$ and $\Delta t = 5 \times 10^{-4}$.
The peak potential $V_{\text{peak}}$ and the reset potential $V_r$
are denoted with dashed lines. (b) Convergence of the piecewise L1 method on the
PIF model. The error is computed based on the location of the exact and the
approximate spike times $\{\tau_m\}$. The dashed lines denote the expected order
$\mathcal{O}(\Delta t_{\text{max}})$.}
\label{fig:results:exp1}
\end{figure}

This system also falls into the category described in \Cref{sc:error} and the
same error estimates apply. Therefore, we can test the accuracy of the given
estimates for different orders $\alpha$ and different grid sizes. In the case
of the PIF model, we can solve the implicit equation of the L1 method exactly
and do not require the methods from \Cref{ssc:methods:lambertw}. We also replace
the adaptive spike time approximation with a simple linear interpolator: when
$\hat{V}_{n + 1}^- > V_{\text{peak}}$, we set
\[
t_{n + 1} = \frac{V_{\text{peak}} - V_n^+}{\hat{V}_{n + 1}^- - V_n^+} \hat{t}_{n + 1}
    + \frac{\hat{V}_{n + 1}^- - V_{\text{peak}}}{\hat{V}_{n + 1}^- - V_n^+} t_n
\]
and proceed with the reset as described in \Cref{alg:methods:evolution}. To verify the
method, we evolve the equation to a  (non-dimensional) final time $T = 32$
using different $\alpha \in \{0.5, 0.75, 0.95\}$ and fixed time steps
$\Delta t \in \{10^{-2}, 5 \times 10^{-3}, 10^{-3}, 5 \times 10^{-4}\}$. Note
that the PIF model with a constant current $I$ does not require adaptive time
stepping, due to the function being very well-behaved away from $t = 0$. With
these parameters, we have a total of $6$ spike times. We can see
in \Cref{fig:results:exp1} that we obtain the expected $\mathcal{O}(\Delta t)$
order of convergence, due to the first-order estimate of the spike time locations.

\subsection{Adaptivity of a simple LIF model}
\label{ssc:results:exp2}

To briefly validate the step size adaptation presented in \Cref{ssc:methods:adaptive},
we turn to the more complex leaky integrate-and-fire model (LIF). The fractional
extension of the model is given by
\[
C \od{^\alpha V}{t^\alpha} = I - g_L (V - E_L)
\quad \text{if} \quad V < V_{\text{peak}}
\quad \text{else} \quad V \gets V_r,
\]
where $C = \qty{100}{\pico\farad~ \milli\second^{\alpha - 1}}$ is the fractional
capacitance, $I = \qty{160}{\pico\ampere}$ is the constant current,
$g_L = \qty{3}{\nano\siemens}$ is the total leak conductance, $E_L = \qty{-50}{\milli\volt}$
is the effective resting potential, $V_{\text{peak}} = \qty{0}{\milli\volt}$ is
the peak membrane potential, and $V_r = \qty{-48}{\milli\volt}$ is the
reset potential. The system is non-dimensionalized using $V_{\text{ref}} =
\qty{1}{\milli\volt}$ and $\alpha = 0.85$. As before, we evolve the equation to
$T = 32$ (non-dimensional) with a linear approximation of the spike time, but using
the adaptive time stepping method from \Cref{ssc:methods:adaptive}.

For the adaptive algorithm, we set $\theta = 1.0, \sigma = 0.5$ and $\rho = 1.5$.
The minimum time step is set to $\Delta t_{\text{min}} = 10^{-5}$ and the initial
time step is taken to be $\Delta t_0 = 10^{-1}$. The remaining parameters
$\chi_{\text{min}}$ and $\chi_{\text{max}}$ must be determined empirically to
result in a stable and efficient adaptive scheme. In this experiment, we have
chosen them in such a way that adaptivity is showcased for the simple LIF model.
They are
\[
\chi_{\text{min}} = \left\{\frac{2}{2^k} \,\middle|\, k = 0, 4, 8\right\}
\quad \text{and} \quad
\chi_{\text{max}} = \left\{\frac{4}{2^k} \,\middle|\, k = 0, 4, 8\right\}.
\]

\begin{figure}[ht!]
\begin{subfigure}{0.45\linewidth}
\centering
\includegraphics[width=0.9\textwidth]{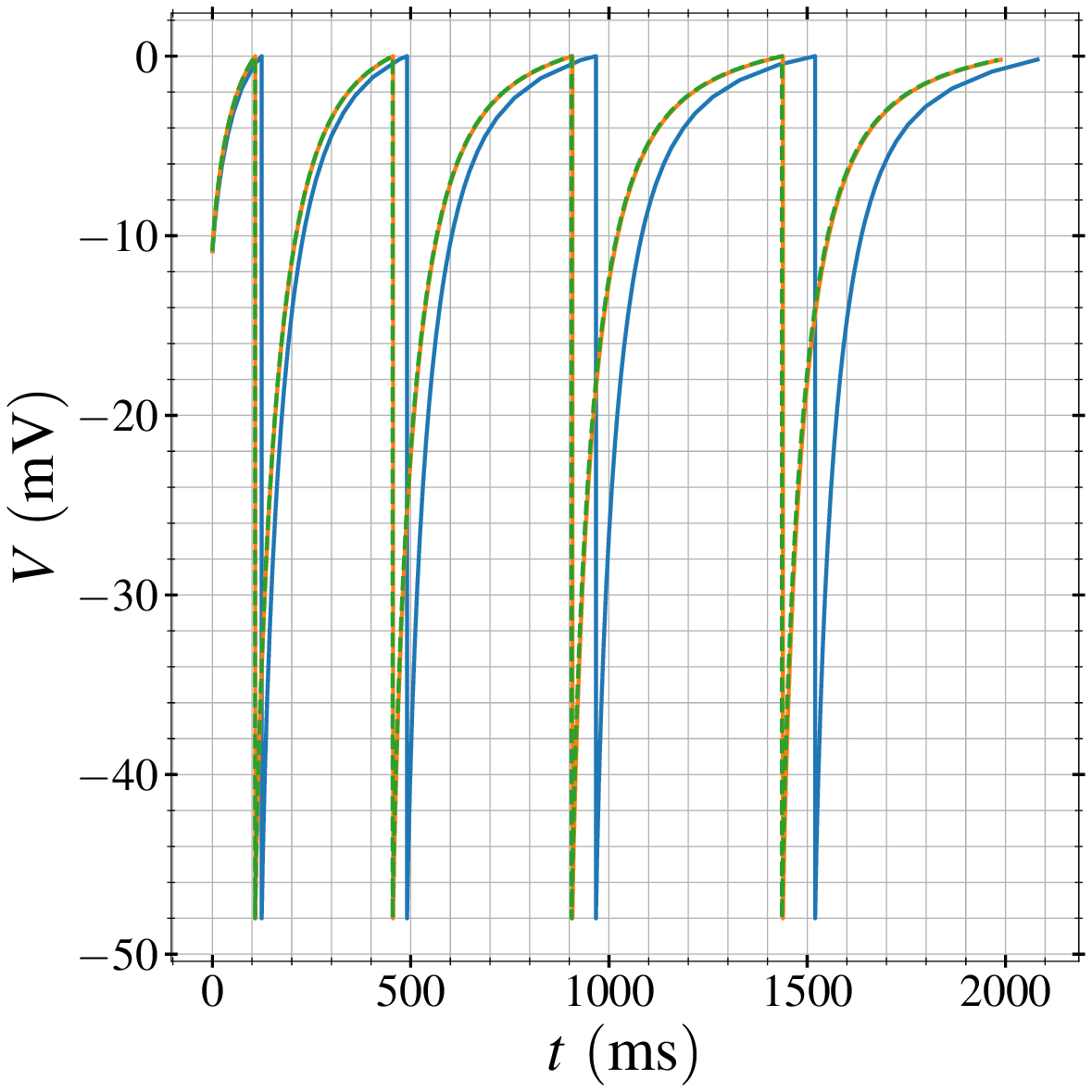}

\caption{Membrane potential $V$.}
\end{subfigure}
\begin{subfigure}{0.45\linewidth}
\centering
\includegraphics[width=0.9\textwidth]{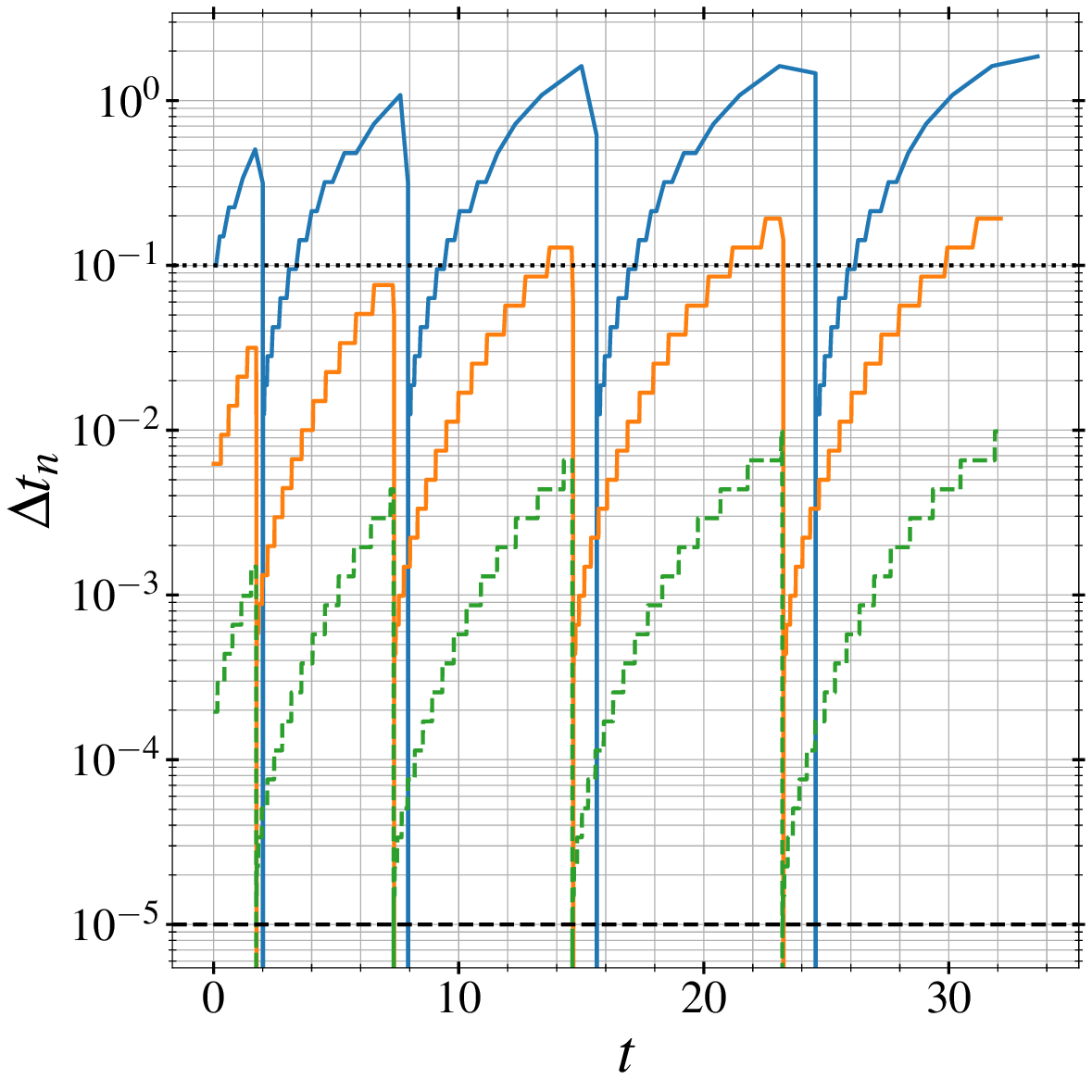}

\caption{Time step $\Delta t_n$.}
\end{subfigure}

\caption{Evolution of the LIF model using an adaptive step size algorithm
with $\Delta t_0 = 10^{-1}$ (dotted black) and $\Delta t_{\text{min}} = 10^{-5}$
(dashed black). The bounds on the pointwise error are shown:
$\chi_{\text{max}} = 2^2$ (full), $\chi_{\text{max}} = 2^{-2}$ (dashed) and
$\chi_{\text{max}} = 2^{-6}$ (dotted).}
\label{fig:results:exp2}
\end{figure}

The results can be seen in \Cref{fig:results:exp2}. At least on $[0, \tau_1]$,
we know that the solution to the LIF model with constant current $I$ is given
by the Mittag--Leffler function, which in the limit of $\alpha \to 1$ becomes
the exponential. Due to this smoothness of the solution, the time
step is allowed to increase in a stepwise fashion until a new spike is near,
when the linear interpolation results in a drastic decrease in the next step.
As expected, the bounds $(\chi_{\text{min}}, \chi_{\text{max}})$ have a
noticeable effect on the time step and allow for more accurate results. The
convergence can be easily seen empirically in \Cref{fig:results:exp2}a.

\subsection{Convergence of the \texorpdfstring{\FrAdEx{}}{FrAdEx} model}
\label{ssc:results:exp3}

Having validated the behaviour of the numerical methods from \Cref{sc:methods}
on simpler models, we can now continue to examine the convergence and performance
of the more realistic \FrAdEx{} model, for which we only consider the case
$\alpha \equiv \alpha_1 = \alpha_2$. To verify the convergence of the model,
we cannot compare to a known analytical solution, since no such solutions
exist. However, we can compare to a significantly finer discretization of the
equations, which is known as self-convergence.

As in \Cref{ssc:results:exp1}, we will look at the convergence of the spike
times and only perform the convergence study for one set of parameters
with $\alpha = 0.9$. We take the fractional capacitance
$C = \qty{100}{\pico\farad \milli\second^{\alpha - 1}}$, the constant
current $I = \qty{160}{\pico\ampere}$, the total leak conductance
$g_L = \qty{3.0}{\nano\siemens}$, the effective resting potential
$E_L = \qty{-50}{\milli\volt}$, the effective threshold potential
$V_T = \qty{-50}{\milli\volt}$, the threshold slope factor $\Delta_T = \qty{2}{\milli\volt}$,
the time constant $\tau_w = \qty{150}{\milli\second}$, the conductance
$a = \qty{4}{\nano\siemens}$, the membrane threshold potential
$V_{\text{peak}} = \qty{0}{\milli\volt}$, the reset potential
$V_r = \qty{-48}{\milli\volt}$, and the spike triggered adaptation
$b = \qty{120}{\pico\ampere}$. The systems is non-dimensionalized according to
\Cref{sc:dim}. The equation is evolved to $T = 50$ (non-dimensional)
with an initial time step $\Delta t_0 = 10^{-2}$, and adaptive parameters
$(\theta, \sigma, \rho) = (1, 0.5, 1.5)$. The adaptive error bounds are taken as
\[
\chi_{\text{min}} = \left\{\frac{1}{2^k} \middle| k \in \{0, \dots, 7\}\right\}
\quad \text{and} \quad
\chi_{\text{max}} = \left\{\frac{2}{2^k} \middle| k \in \{0, \dots, 7\}\right\}.
\]

\begin{figure}[ht!]
\begin{subfigure}{0.45\linewidth}
\centering
\includegraphics[width=0.9\textwidth]{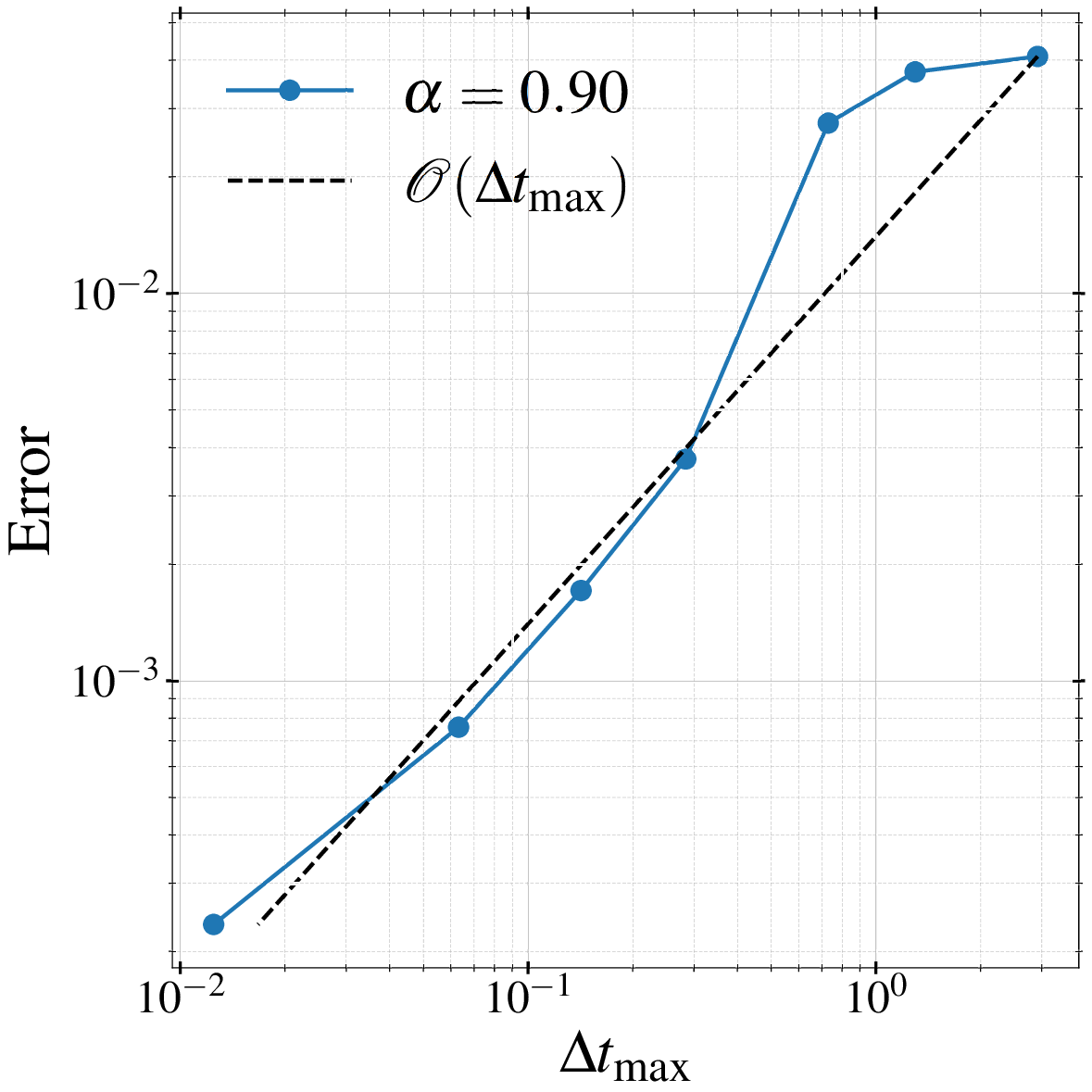}

\caption{Relative error in spike positions.}
\end{subfigure}
\begin{subfigure}{0.45\linewidth}
\centering
\includegraphics[width=0.9\textwidth]{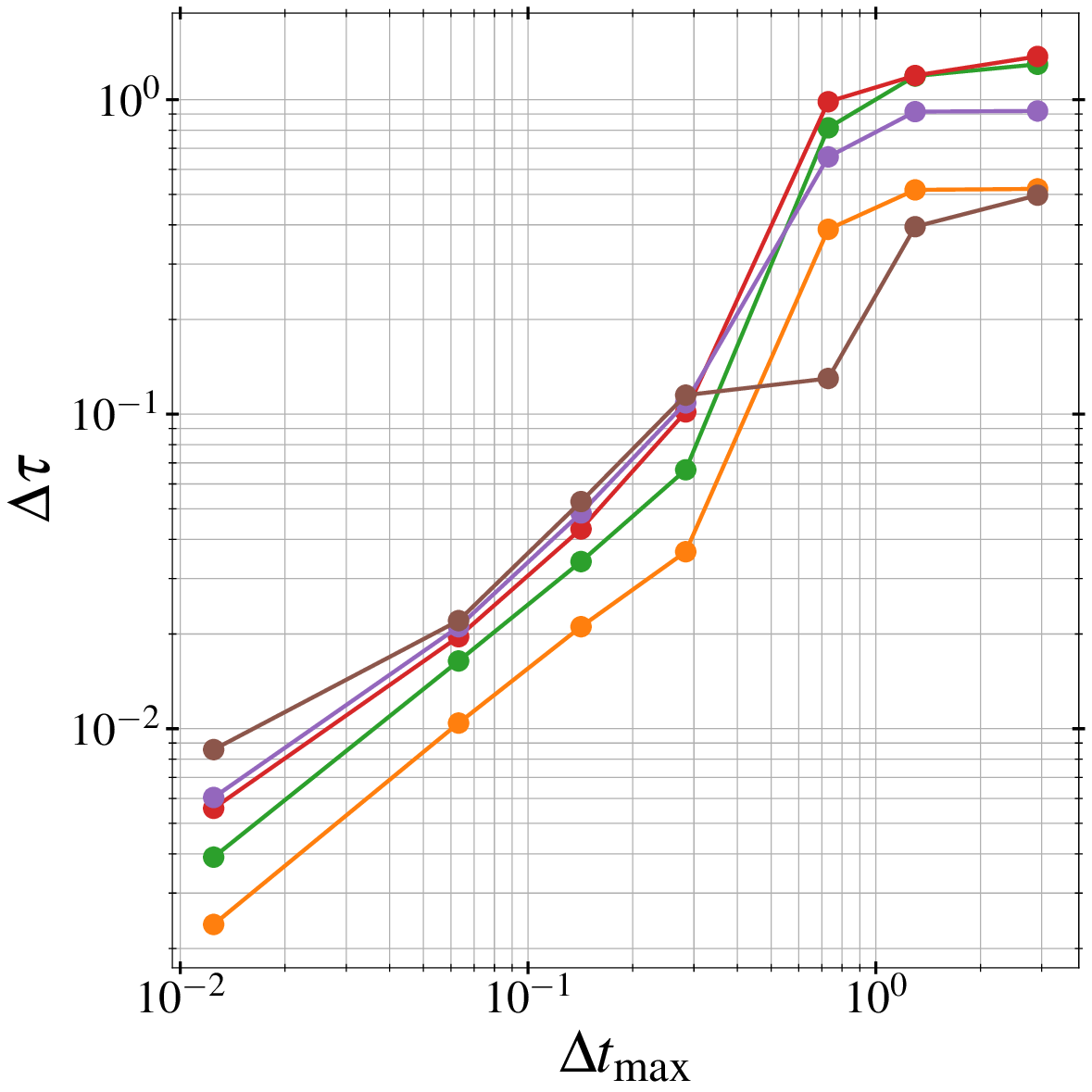}

\caption{Spike time error.}
\end{subfigure}

\caption{(a) First-order self-convergence of the \FrAdEx{} model and (b)
Pointwise error of each of the $5$ numerical spike times for every
$(\chi_{\text{min}}, \chi_{\text{max}})$ pair.}
\label{fig:results:exp3}
\end{figure}

The convergence behaviour can be seen in \Cref{fig:results:exp3}. As in the
case of the simpler PIF model, we recover first-order convergence when comparing
the finest solution with the coarser variants in our experiment. We can see in
\Cref{fig:results:exp3}b that the self-convergence error in each of the $5$
spike times follows largely the same pattern and the errors for later spike times
do not exhibit degraded convergence.

\subsection{Performance on the \texorpdfstring{\FrAdEx{}}{FrAdEx} model}
\label{ssc:results:exp4}

Finally, we briefly investigate the efficiency of the L1 method that was implemented.
We take the parameters from \Cref{ssc:results:exp3} and perform some standard
performance and scaling studies for $\alpha = 0.9$. We expect $\alpha > 0.9$
in practical applications, since lower values of the fractional order do not
present some irregular spiking behaviours, as shown in \Cref{sc:behaviour}.

For the adaptive time-stepping, we take
\[
\chi_{\text{min}} = \left\{\frac{1}{2^k} \middle| k \in \{0, \dots, 7\}\right\}
\quad \text{and} \quad
\chi_{\text{max}} = \left\{\frac{2}{2^k} \middle| k \in \{0, \dots, 7\}\right\},
\]
with $\theta = 1.0, \sigma = 0.5, \rho = 2.0$ and a minimum time step of
$\Delta t_{\text{min}} = 10^{-5}$. The equation is evolved to $T = 50$ (non-dimensional)
with an initial time step of $\Delta t_0 = 10^{-2}$. For comparison, we include
simulations with a fixed step size and compare their accuracy. The fixed time
step is taken as
\[
\Delta t_{\text{fixed}} = \frac{T}{2 N_{\text{adaptive}}},
\]
where $N_{\text{adaptive}}$ is the number of time steps taken by the adaptive
algorithm. Note that even for the fixed step size evolution, we continue to
restrict the time step close to a spike time to ensure that the Lambert W
function remains real, as discussed in \Cref{ssc:methods:spikes}. The error
for both methods is computed by means of self-convergence as described
in \Cref{ssc:results:exp3}.

\begin{figure}[ht!]
\begin{subfigure}{0.45\linewidth}
\centering
\includegraphics[width=0.9\textwidth]{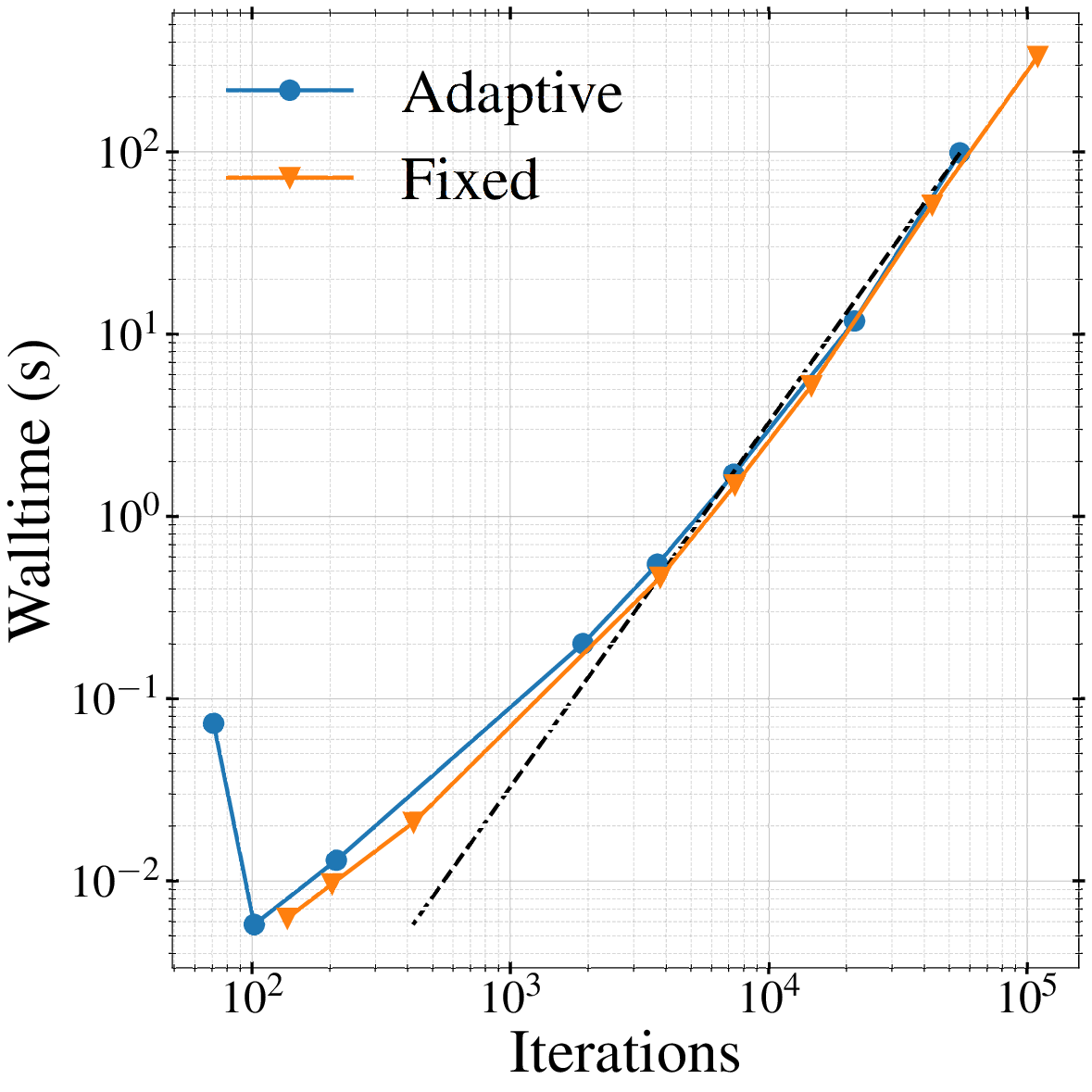}

\caption{Scaling.}
\end{subfigure}
\begin{subfigure}{0.45\linewidth}
\centering
\includegraphics[width=0.9\textwidth]{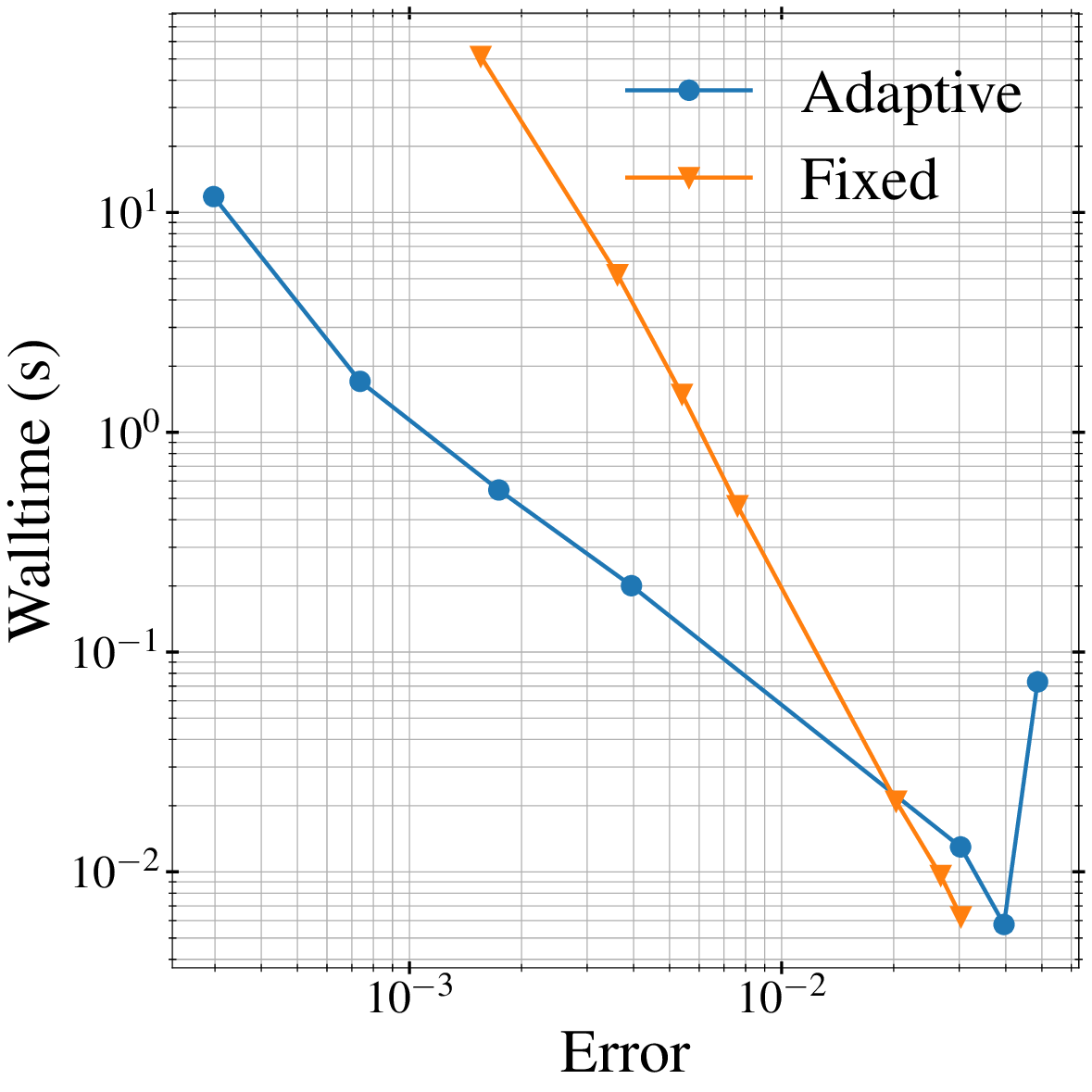}

\caption{Efficiency.}
\end{subfigure}

\caption{(a) Scaling of the L1 method with adaptive and fixed step sizes and
(b) Efficiency of the method with adaptive and fixed step sizes, which is
measured in the time required to reach a certain convergence error.}
\label{fig:results:exp4}
\end{figure}

The results of the scaling study can be seen in \Cref{fig:results:exp4}. First,
in \Cref{fig:results:exp4}a we can see that both methods scale with the
expected $\mathcal{O}(N^2)$ asymptotic order. This is due to the memory terms
inherent in fractional-order equations. For smooth solutions, the scaling can
be improved to $\mathcal{O}(N \log N)$ with the use of Fourier transforms, but
it is unclear if such methods could be used for non-uniform discontinuous systems.
Then, in \Cref{fig:results:exp4}b we can see that the fixed time step method
generally performs significantly worse than the adaptive method. To obtain
modestly small errors of $10^{-3}$, it requires about an order of magnitude more
time than the adaptive method. This is not surprising, since the fixed step
method is unable to accurately capture the exponential spiking of the model.

\section{Neural behaviour under the \texorpdfstring{\FrAdEx{}}{FrAdEx} model}
\label{sc:behaviour}

Finally, we look at simulating the fractional order \FrAdEx{} model
\eqref{eq:dim:model} with a focus on its phenomenological responses. We expect
the model to generate multiple firing patterns depending on the choice of
parameter values for different fractional exponents $\alpha$. In order to study
the range of firing responses accessible with the model, we adjust the
fractional order $\alpha$.

\begin{figure}[ht!]
\centering
\begin{subfigure}{0.3\linewidth}
\centering
\includegraphics[width=\linewidth]{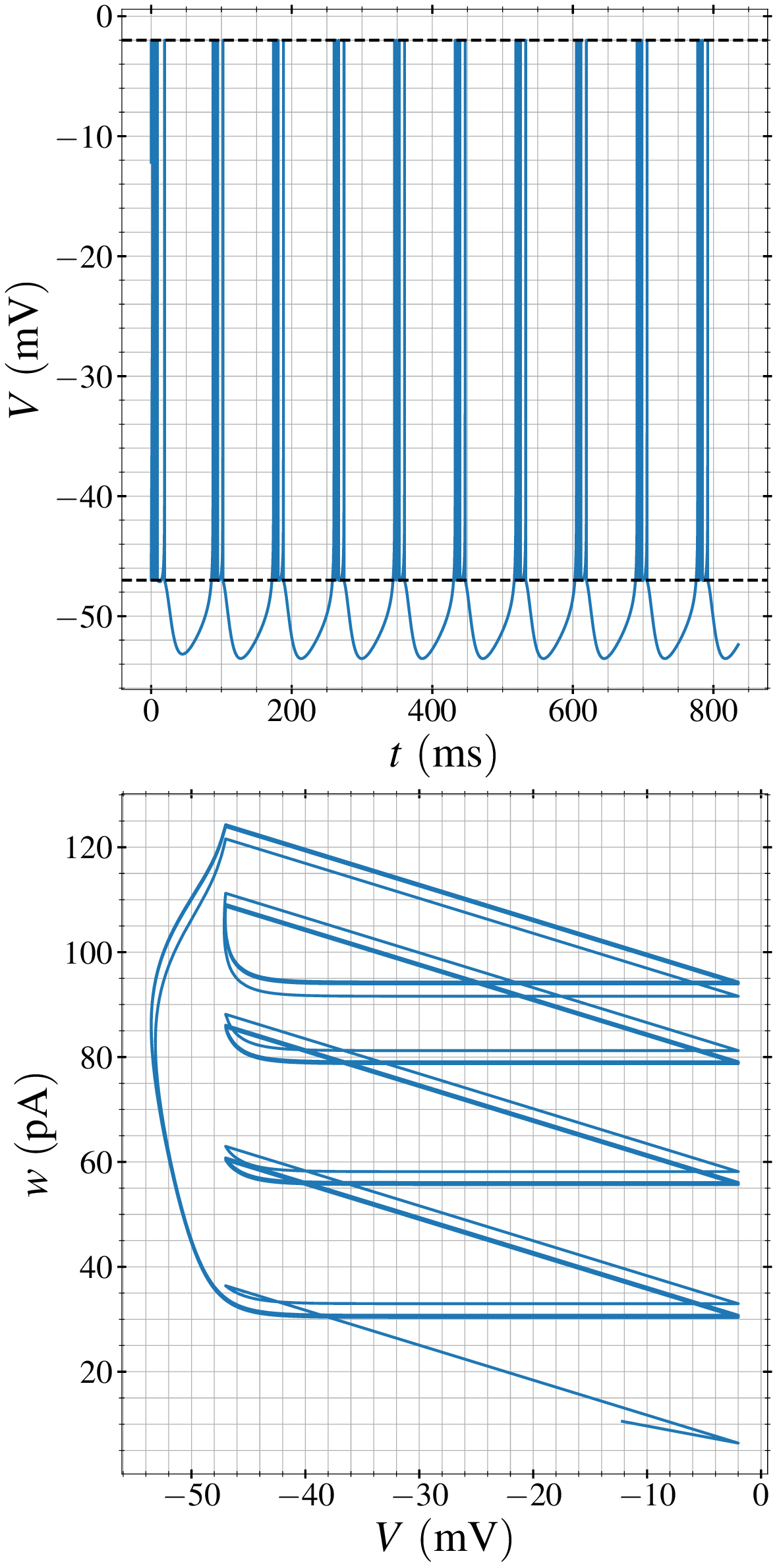}

\caption{$\alpha = 0.999$.}
\end{subfigure}
\begin{subfigure}{0.3\linewidth}
\centering
\includegraphics[width=\linewidth]{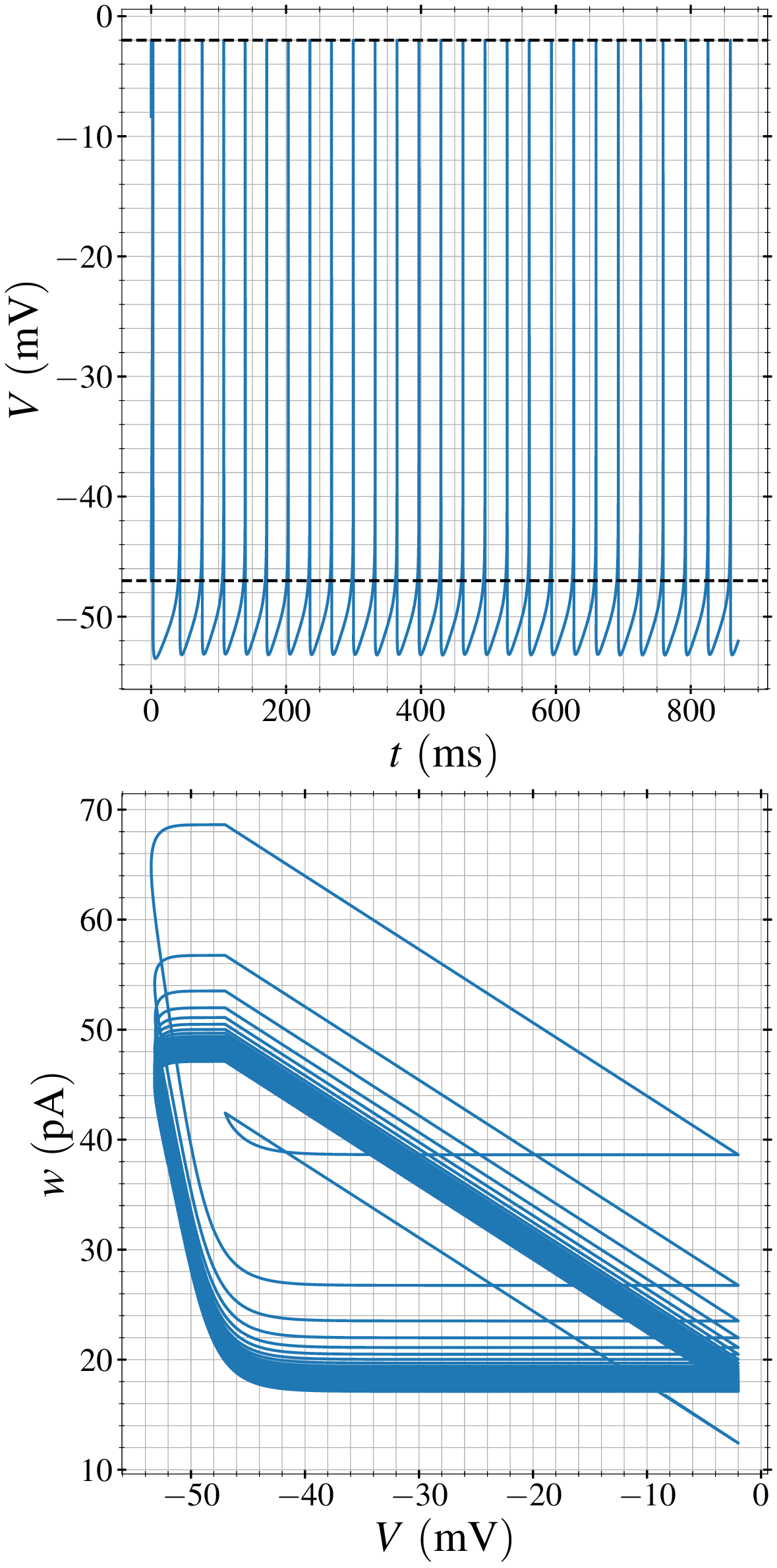}

\caption{$\alpha = 0.98$.}
\end{subfigure}
\begin{subfigure}{0.3\linewidth}
\centering
\includegraphics[width=\linewidth]{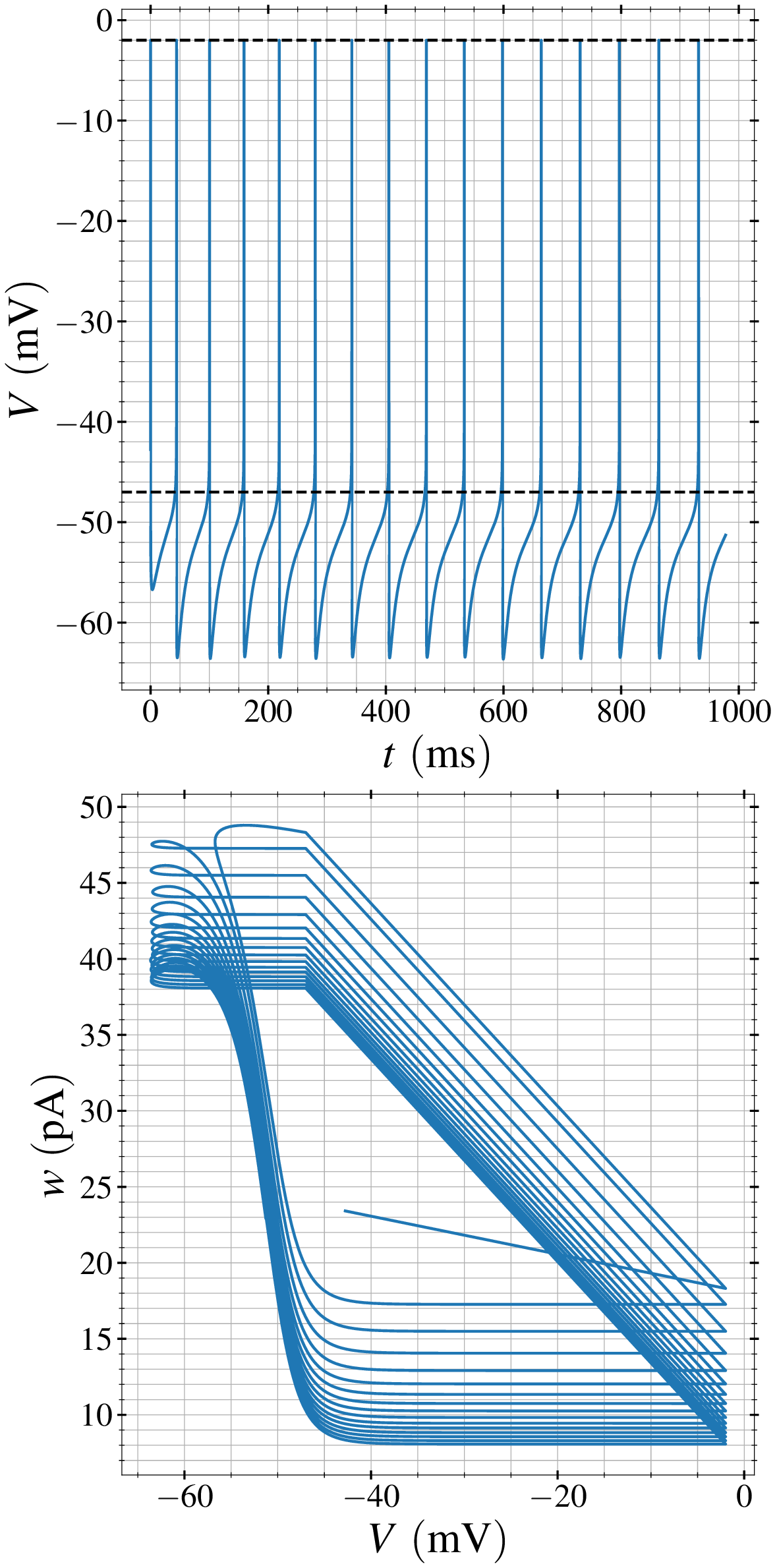}

\caption{$\alpha = 0.93$.}
\end{subfigure}

\caption{We have the evolution on $[0, 100]$ (non-dimensional) for different values
of $\alpha$ of the membrane potential (top) and the phase plane diagram (bottom).
The values of the parameters have been considered as follows $C =
\qty{100}{\pico\farad \milli\second^{\alpha - 1}}, I = \qty{160}{\pico\ampere},
g_L = \qty{12}{\nano\siemens}, E_L = \qty{-60}{\milli\volt}, V_T = \qty{-50}{\milli\volt},
\Delta_T = \qty{2}{\milli\volt}, \tau_w = \qty{130}{\milli\second},
a = \qty{-11}{\nano\siemens}, b = \qty{30}{\pico\ampere}, V_r = \qty{-48}{\milli\volt},
V_{\text{peak}} = \qty{-2}{\milli\volt}$ in the model \eqref{eq:model:fde}.}
\label{fig:behaviour:set1}
\end{figure}

In \Cref{fig:behaviour:set1}a, we show an example of a typical firing pattern
known as \emph{chattering} that is generated by setting the fractional order
close to 1, e.g. $\alpha = 0.999$. The neuron exhibits stereotypical bursting of
closely spaced spikes. As we decrease the fractional order to $\alpha = 0.98$,
the pattern switches to a fast spiking with a broad spike after-potential (SAP)
that is characterised by a small curvature after the spike (see
\Cref{fig:behaviour:set1}b). The fractional order \FrAdEx{} model can generate
adapting and tonic traces of different types \cite{connors1990intrinsic,naud2008firing}.
As we decrease $\alpha = 0.93$, we can observe fast spiking without adaptation
and with tonic spiking with sharp SAP. Here, the membrane potential increases monotonically after the rapid downswing of an action potential, as shown in
\Cref{fig:behaviour:set1}c. The corresponding phase diagrams also highlight
these changes in behaviour, as seen at the bottom of \Cref{fig:behaviour:set1}.

\begin{figure}[ht!]
\centering
\begin{subfigure}{0.3\linewidth}
\centering
\includegraphics[width=0.9\textwidth]{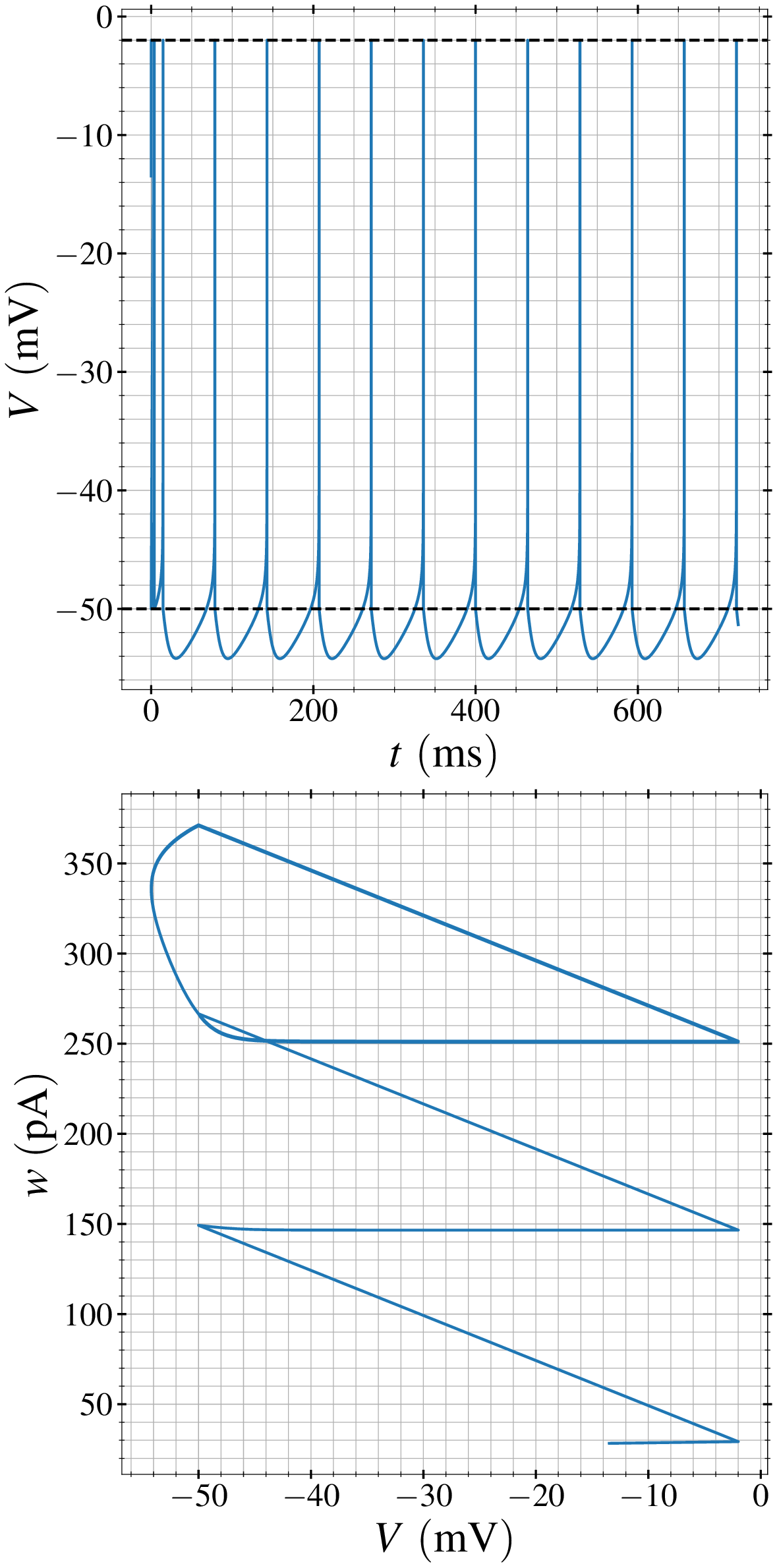}

\caption{$\alpha = 0.999$.}
\end{subfigure}
\begin{subfigure}{0.3\linewidth}
\centering
\includegraphics[width=0.9\textwidth]{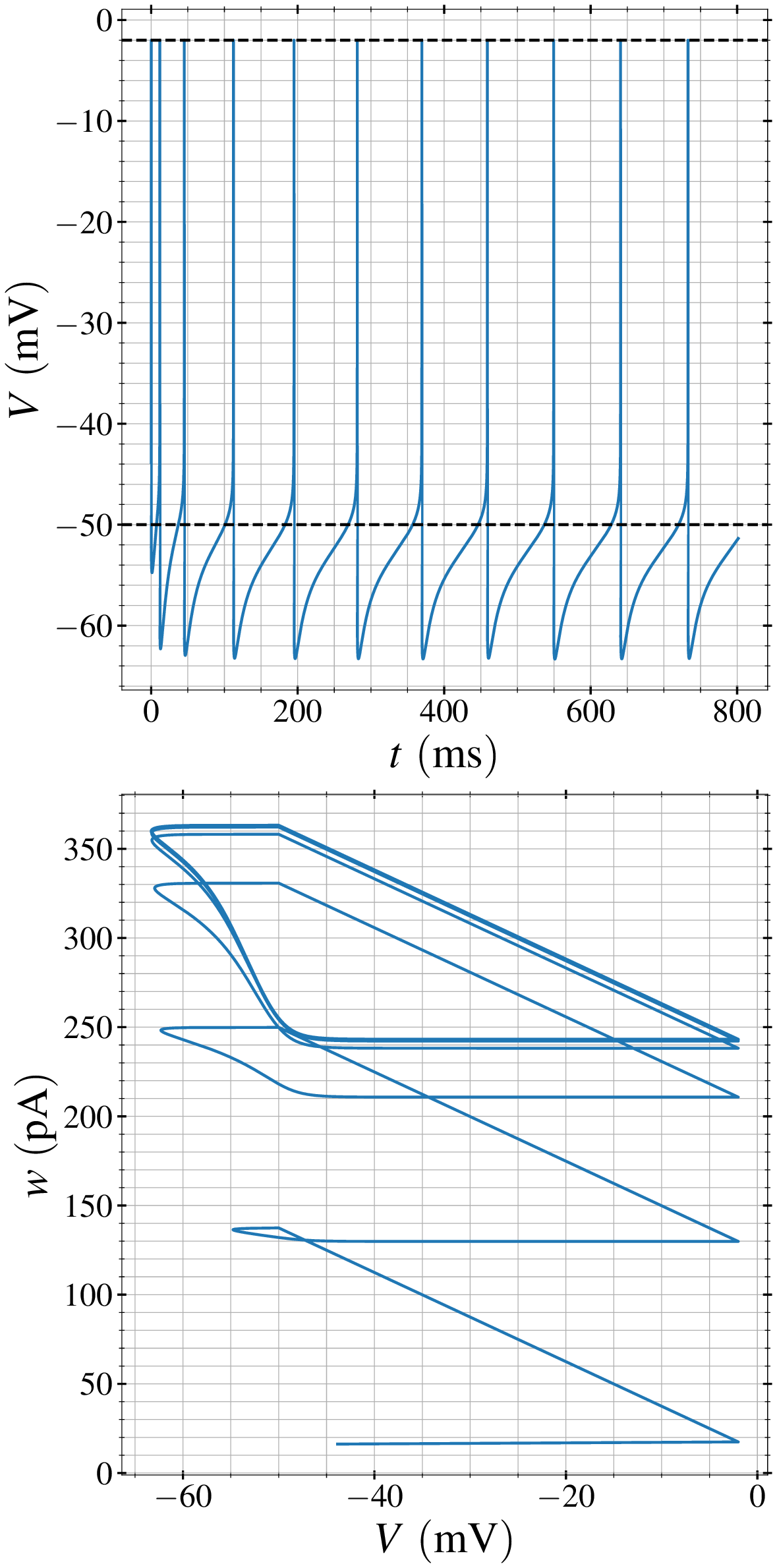}

\caption{$\alpha = 0.95$.}
\end{subfigure}
\begin{subfigure}{0.3\linewidth}
\centering
\includegraphics[width=0.9\textwidth]{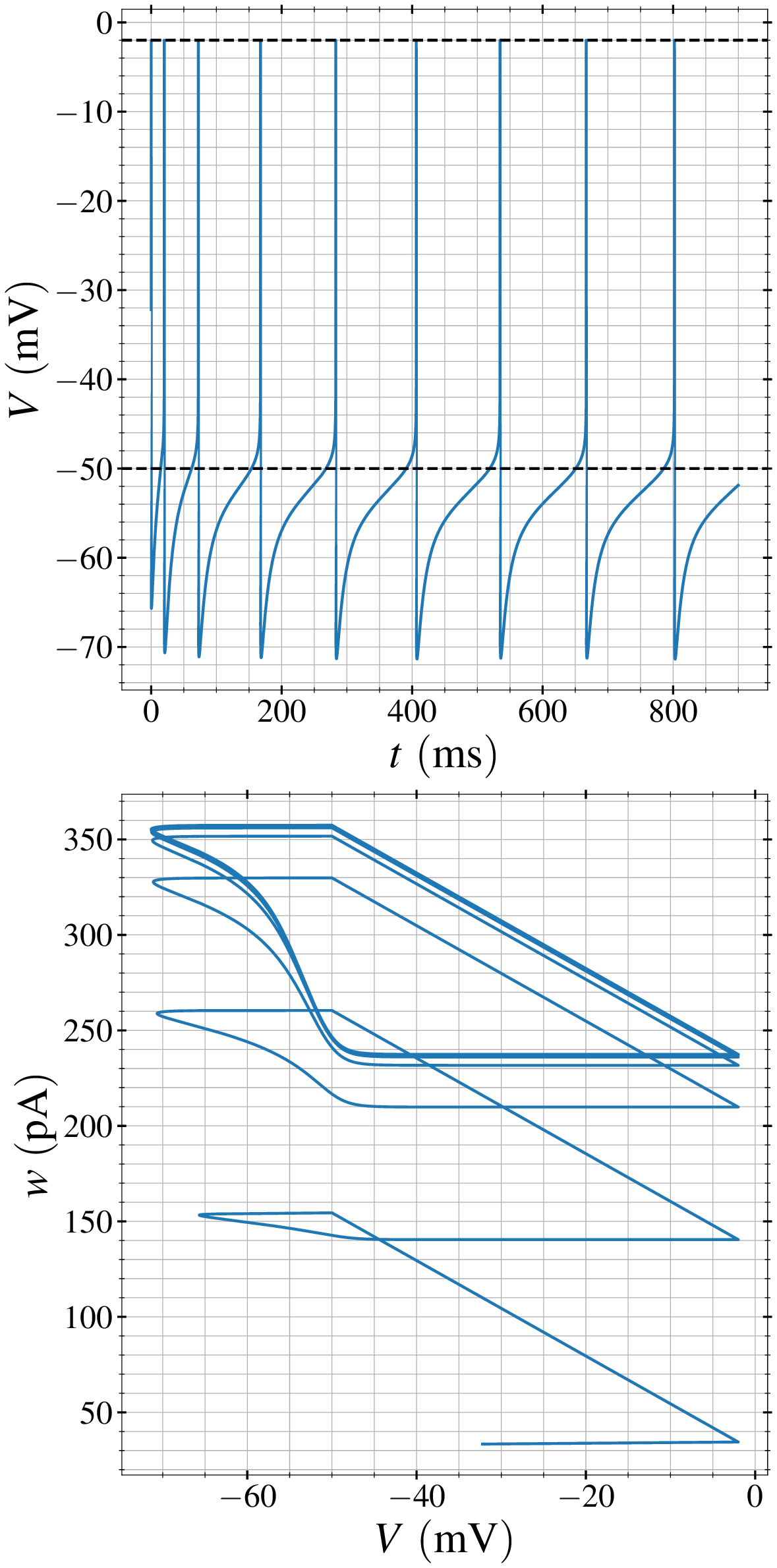}

\caption{$\alpha = 0.9$.}
\end{subfigure}

\caption{We have the evolution on $[0, 100]$ (non-dimensional) for different
values of $\alpha$ of the membrane potential (top) and the phase plane diagram
(bottom). The values of the parameters are taken to be
$C = \qty{130}{\pico\farad \milli\second^{\alpha - 1}}, I = \qty{400}{\pico\ampere},
g_L = \qty{18}{\nano\siemens}, E_L = \qty{-58}{\milli\volt}, V_T = \qty{-50}{\milli\volt},
\Delta_T = \qty{2}{\milli\volt}, \tau_w = \qty{150}{\milli\second},
a = \qty{4}{\nano\siemens}, b = \qty{120}{\pico\ampere}, V_r = \qty{-50}{\milli\volt},
V_{\text{peak}} = \qty{-2}{\milli\volt}$ in the model \eqref{eq:model:fde}.}
\label{fig:behaviour:set2}
\end{figure}

In \Cref{fig:behaviour:set2}a, we use the same fractional order $\alpha = 0.999$
with another suitable set of parameters. We observe a regular discharge of action
potential, i.e. tonic firing, and a broad spike after potential at this fractional
order. However, the behaviour switches to regular spiking with sharp SAP at
$\alpha = 0.98$ (see \Cref{fig:behaviour:set2}b) and then spike frequency
adaptation starts (i.e. the neurons fire a few spikes with short
interspike interval and then the inter-spike period increases) at $\alpha = 0.93$
(see \Cref{fig:behaviour:set2}c). The corresponding phase diagrams also highlight
these changes in behaviour, as seen at the bottom of \Cref{fig:behaviour:set2}.

\begin{figure}[ht!]
\centering
\begin{subfigure}{0.3\linewidth}
\centering
\includegraphics[width=0.9\textwidth]{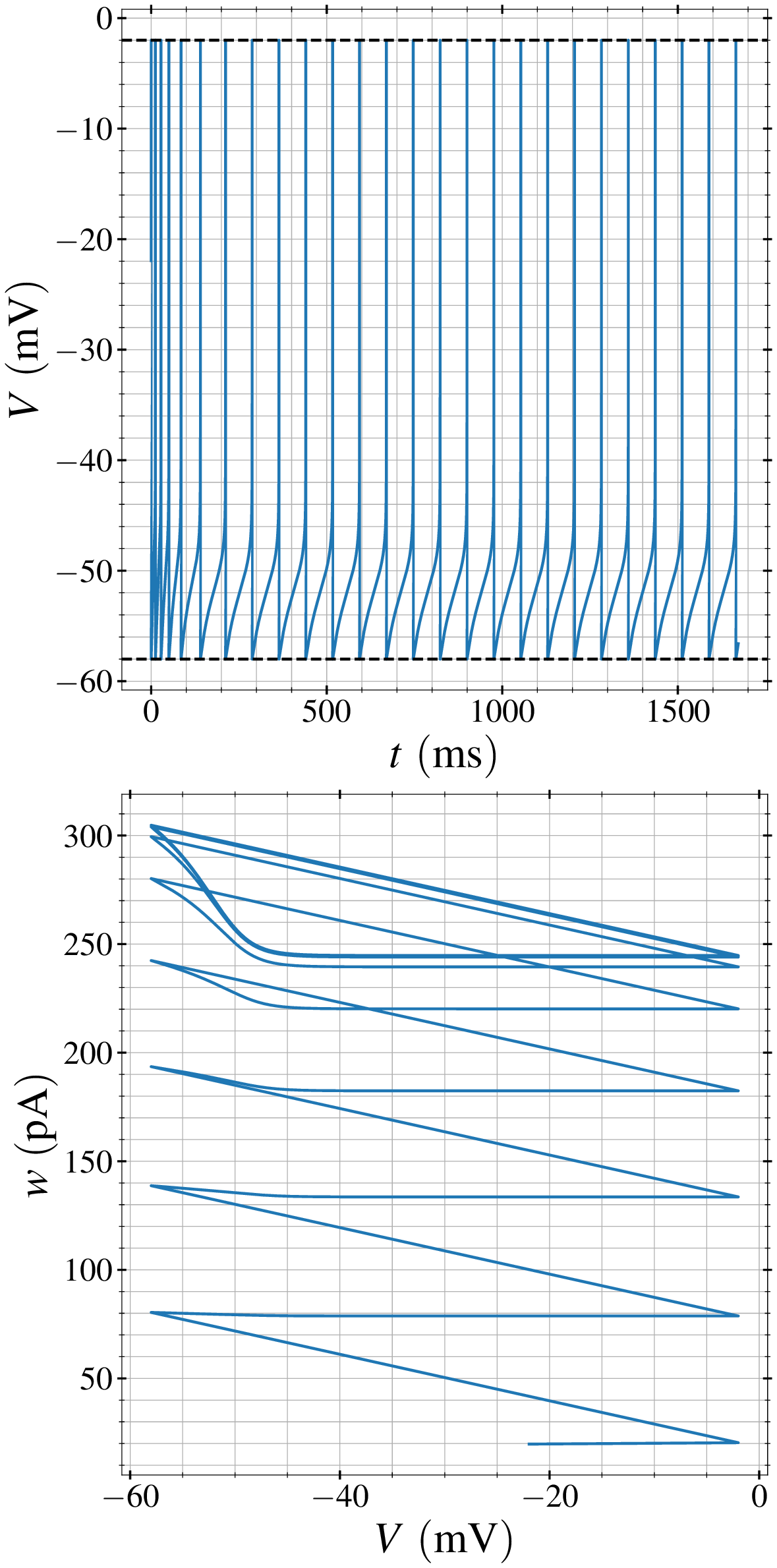}

\caption{$\alpha = 0.999$.}
\end{subfigure}
\begin{subfigure}{0.3\linewidth}
\centering
\includegraphics[width=0.9\textwidth]{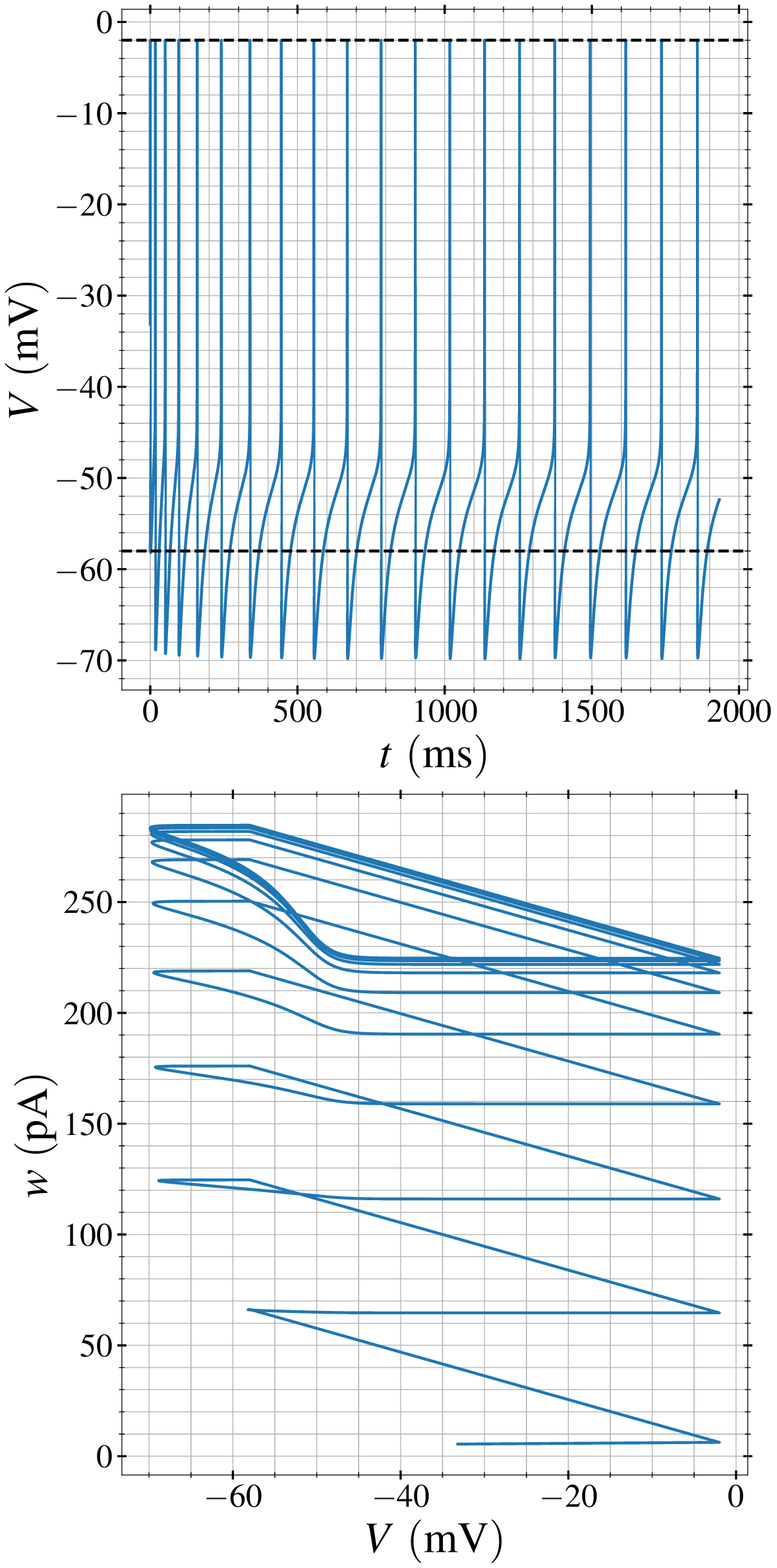}

\caption{$\alpha = 0.95$.}
\end{subfigure}
\begin{subfigure}{0.3\linewidth}
\centering
\includegraphics[width=0.9\textwidth]{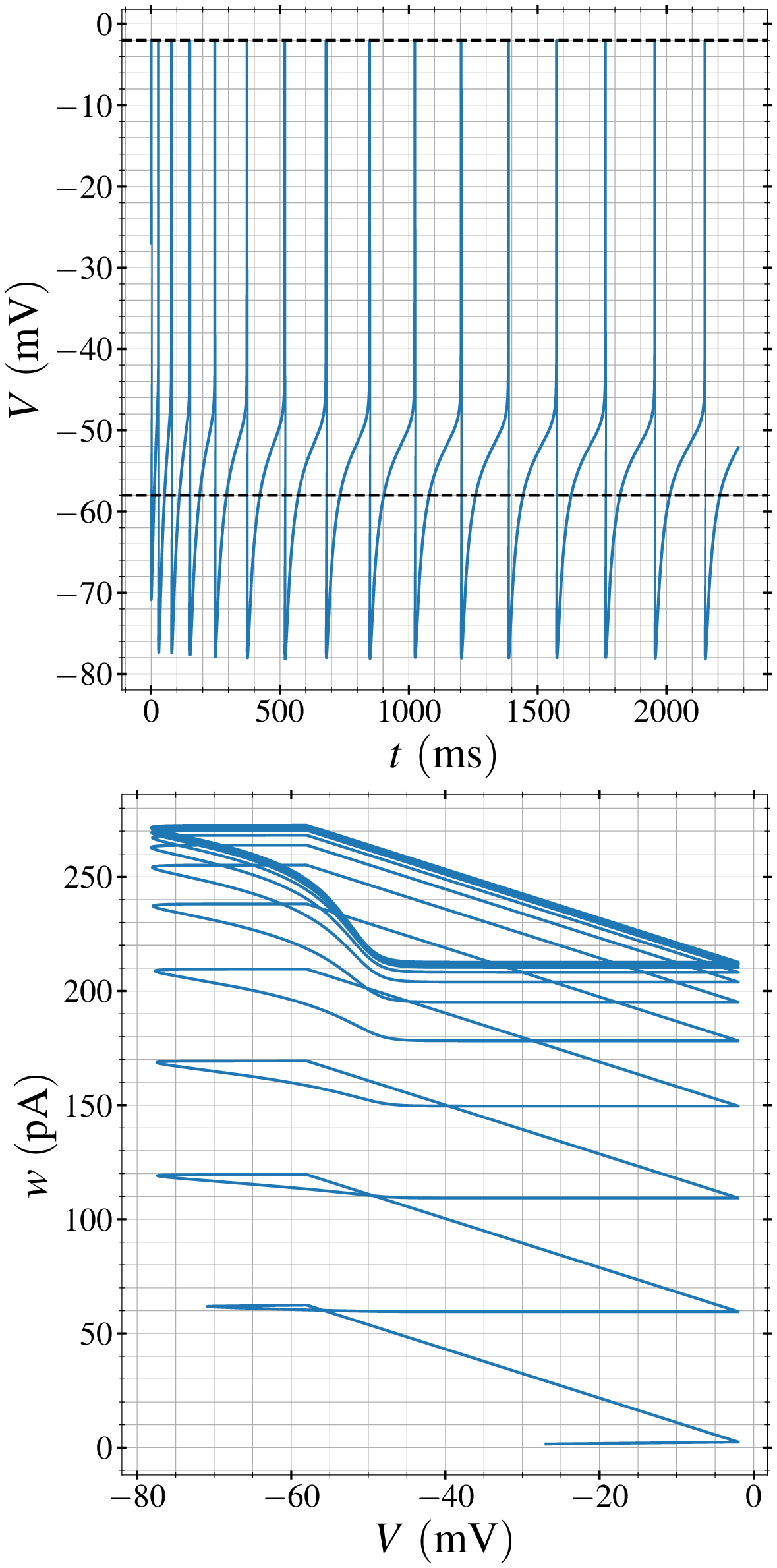}

\caption{$\alpha = 0.9$.}
\end{subfigure}

\caption{We have the evolution on $[0, 100]$ (non-dimensional) for different
values of $\alpha$ of the membrane potential (top) and the phase plane diagram
(bottom). The values of the parameters are taken to be
$C = \qty{200}{\pico\farad \milli\second^{\alpha - 1}}, I = \qty{500}{\pico\ampere},
g_L = \qty{12}{\nano\siemens}, E_L = \qty{-70}{\milli\volt}, V_T = \qty{-50}{\milli\volt},
\Delta_T = \qty{2}{\milli\volt}, \tau_w = \qty{300}{\milli\second},
a = \qty{2}{\nano\siemens}, b = \qty{60}{\pico\ampere}, V_r = \qty{-58}{\milli\volt},
V_{\text{peak}} = \qty{-2}{\milli\volt}$ in the model \eqref{eq:model:fde}.}
\label{fig:behaviour:set3}
\end{figure}

For a final example, we look at another set of parameters \Cref{fig:behaviour:set3}.
Here, the fractional order \FrAdEx{} model with $\alpha = 0.999$ can generate
intrinsic bursting, initial bursting, and then single regular spikes are emitted
(see \Cref{fig:behaviour:set3}a). As we decrease $\alpha$, we showed the transitions to tonic
spiking with sharp SAP (see \Cref{fig:behaviour:set3}b) and, finally, the model
produces regular spiking with spike frequency adaptation at $\alpha = 0.93$
(see \Cref{fig:behaviour:set3}c). The corresponding phase diagrams also highlight
these changes in behaviour, as seen at the bottom of \Cref{fig:behaviour:set3}.

\section{Conclusions}
\label{sc:conclusions}

We have discussed the introduction of a novel fractional-order adaptive exponential
(\FrAdEx{}) integrate-and-fire model for neuronal activity. This model is a
natural extension of popular adaptive exponential models that have proven
to be representative of biophysical  neurons. We have shown that the
fractional extension can also reproduce some sought after phenomenology, such as
chattering or tonic spiking of the neuron. Furthermore, we have shown that the
fractional order $\alpha$ can be used to control the behaviour of the neuronal
spiking. We know that decreasing the fractional order is equivalent to putting
more weight on the memory property of the fractional derivative. In our experiments,
this has resulted in a more regular bursting pattern, which is consistent with
similar simulations of the fractional LIF model. On the other hand, in the limit of
$\alpha \to 1$, the model will recover the integer order behaviour. Due to its
complexity and wide parameter ranges, we have not fully investigated the full
response capabilities that the \FrAdEx{}  model can achieve. This requires a more
careful theoretical analysis that is in preparation. However, we do expect that
some types of irregular spiking can only be expected in the limit of sufficiently
large $\alpha$.

The main contribution of this paper is the numerical investigation of the \FrAdEx{} model. For this, we have introduced a novel adaptive implicit L1-type numerical
method that we have fully described. As expected, at the numerical level, the
main difficulties have been the rapid exponential growth of the voltage $V$ and
the state-dependent rest conditions that the integrate-and-fire models are known for.
The exponential growth has been dealt with by making use of an implicit method,
which we have managed to solve efficiently through the use of the Lambert W
function. Previous experiments with explicit methods have not yielded satisfactory
results for this problem. The discontinuous reset condition is mainly handled
through the use of the non-uniform L1 method and adaptive time stepping. This
allows for a robust approach to the discontinuity during the exponential growth
and an accurate estimate of the spike times themselves. We have also presented a
complete error model of the method that can be directly applied to other
integrate-and-fire models. This method has been validated on several benchmark
examples that show the expected first-order convergence.

In the case of integer-order models, higher-order methods for discontinuous
ODEs are well-known. We expect to extend the method described here to second-order
for integrate-and-fire models with modest growth leading to the spike. For
example, many PIF and LIF models could be treated by using a higher-order
interpolation of the spike times and the solutions as a reset is detected. However,
the exponential models are significantly more difficult to handle, as it is
unclear how a more accurate approximation of the discontinuities can be
achieved in the fractional derivative. Another important avenue of research
pertains to considering networks of neurons described by the \FrAdEx{} model.
Depending on the coupling of the neurons, explicit solutions based on the
Lambert W function may not be possible and adaptive time stepping methods may
require additional modifications to handle the small steps leading to a
spike. Some of these issues are common to all fractional neuron models and
will no doubt see ample research in the future.

\paragraph{\textbf{Acknowledgements}}
This work was supported in part by West University of Timișoara, Romania,
START Grant No. 33580/25.05.2023 (Fikl). This work was supported in part by
CNCS-UEFISCDI, Romania, Project No. PN-IV-P1-PCE2023-0444 (Kaslik). This work
was supported in part by CSIR-HRDG, Govt. of India under
Grant No. 25/0322/23/EMR-II (Mondal).

\paragraph{\textbf{Declaration of Interests}}
The authors report no conflict of interest.

% \paragraph{\textbf{Data Availability}} Data sharing is not applicable as no datasets
% were generated or analysed to carried out this research.

\bibliography{main}

\end{document}

% kate: default-dictionary en_GB;